\newtheorem{defi}{Definition}
\newtheorem{theo}{Theorem}
\newtheorem{lem}{Lemma}
\newtheorem{obs}{Observation}
\newcommand{\rd}[1]{{#1}}
\newcommand{\bl}[1]{{#1}}
\def\ra{\rangle}
\def\la{\langle}
\def\id{\mathbb{I}}
\def\Tr{\rm Tr}
\def\wm{{\color{white}-}}
\begin{document}

\title{Quantum convolutional channels and \newline multiparameter families of 2-unitary matrices}

\author{Rafał Bistroń}
\orcid{0000-0002-0837-8644}
\author{Jakub Czartowski}
\orcid{0000-0003-4062-833X}
\affiliation{Faculty of Physics, Astronomy and Applied Computer Science, Institute of Theoretical Physics, Jagiellonian University, ul. {\L}ojasiewicza 11, 30--348 Krak\'ow, Poland}
\affiliation{Doctoral School of Exact and Natural Sciences, Jagiellonian University}
\author{Karol Życzkowski}
\orcid{0000-0002-0653-3639}
\affiliation{Faculty of Physics, Astronomy and Applied Computer Science, Institute of Theoretical Physics, Jagiellonian University, ul. {\L}ojasiewicza 11, 30--348 Krak\'ow, Poland}
\affiliation{Center for Theoretical Physics, Polish Academy of Sciences, Al. Lotników 32/46, 02-668 Warszawa, Poland}

\date{09.04.2026}


\begin{abstract}
Many alternative approaches to construct quantum channels with large entangling capacity were proposed in the past decade, resulting in multiple isolated gates. In this work, we 
put forward a novel one, inspired by convolution, which
provides greater freedom of nonlocal parameters.
Although quantum counterparts of convolution have been shown not to exist for pure states, several attempts with various degrees of rigorousness have been proposed for mixed states. In this work, we follow the approach based on coherifications of multi-stochastic operations and demonstrate a surprising connection to gates with high entangling power.
In particular, we identify conditions necessary for the convolutional channels constructed using our method to possess maximal entangling power. Furthermore, we establish new, continuous classes of bipartite 2-unitary matrices of dimension $d^2$ for $d = 7$ and $d = 9$, with $2$ and $4$ free nonlocal parameters beyond simple phasing of matrix elements, corresponding to perfect tensors of rank $4$ or 4-partite absolutely maximally entangled states.
\end{abstract}

\maketitle

\section{Introduction}

Entanglement stands as a pervasive and foundational concept within the realms of quantum mechanics and quantum information. From the inception of the field, entanglement has not only captured the imagination of researchers but also steered numerous endeavours within the discipline~\cite{horodecki^4}. A natural consequence of this exploration has been the in-depth investigation of gates capable of generating substantial entanglement, giving rise to the concept of the entangling power of gates~\cite{Zanardi2000entangling}. 

Particularly noteworthy are bipartite operations that achieve maximal entangling power, known as \textit{2-unitary gates}~\cite{permutatrion_entangling_power}. 
These gates, equivalent to \textit{perfect tensors} of order~4, and 4-partite absolutely maximally entangled (AME) states~\cite{ameMasterThesis, rajchelmieldzioc2025absolutelymaximallyentangledpure}, find applications in areas as diverse as Bernoulli circuits~\cite{dual_uni_to_bernoulli_circuits}, both classical and quantum error-correcting codes~\cite{ReedSolomon1960,scott2003multipartite}, holographic codes~\cite{latorre2015holographic, pastawski2015holographic}, quantum secret sharing~\cite{HelwigEtAl2012secretSharing}, study of entanglement dynamics in quantum circuits~\cite{bertini2019exact} and others~\cite{ameMasterThesis,harper2022perfect, PhysRevA.92.032316}.
While previous constructions, based on orthogonal Latin squares~\cite{permutatrion_entangling_power} and stabilizer states~\cite{stabilizer_high_entanglement}, have yielded isolated solutions, recent developments~\cite{new_AME_16, 36_officers_of_Karol,rather2023biunimodular} suggest the existence of 2-unitary gates, and corresponding perfect tensors of rank~$4$, beyond standard constructions, potentially forming parts of non-trivial continuous families~\cite{Suahil_invariants}. However, continuous families with amplitudes differing from the already known solutions have not been known.

To meet 
the challenge of constructing such families 
we considered a seemingly disconnected problem: generalization of convolution to the setting of quantum states~\cite{bistron2023tristochastic}.
It was noted~\cite{Lomont}, that there is no proper ”operation of convolution”, which for two arbitrary pure states produces a pure state as an outcome. 
The no-go theorem, however, does not apply to density matrices. In particular, a construction of convolution of quantum states called twirled product, was recently proposed~\cite{Aniello_2019, Aniello_2023}, while other techniques were used to generate a composition of bi-partite density matrices~\cite{Karol_cos} and convolution of quantum superoperators~\cite{Sohail_2022}. 

However, none of the aforementioned techniques, sticking to all defining properties of convolution, provide operational implementation. Thus, in this work we 
follow the approach of~\cite{bistron2023tristochastic} which generalizes the construction of convolution by abandoning the associativity, while preserving other properties, especially tristochasticity \rd{which will prove to be a key feature in our study}. The quantum channels obtained in such a way can be realized by certain well-defined parameterizable unitary matrix followed by a partial trace, as presented in Fig.~\ref{fig:enter-label}.
For the sake of simplicity, while slightly abusing the terminology, hereafter we will call this construction \textit{convolutional channel}.

\begin{figure}[t]
    \centering
    \includegraphics[width=1\linewidth]{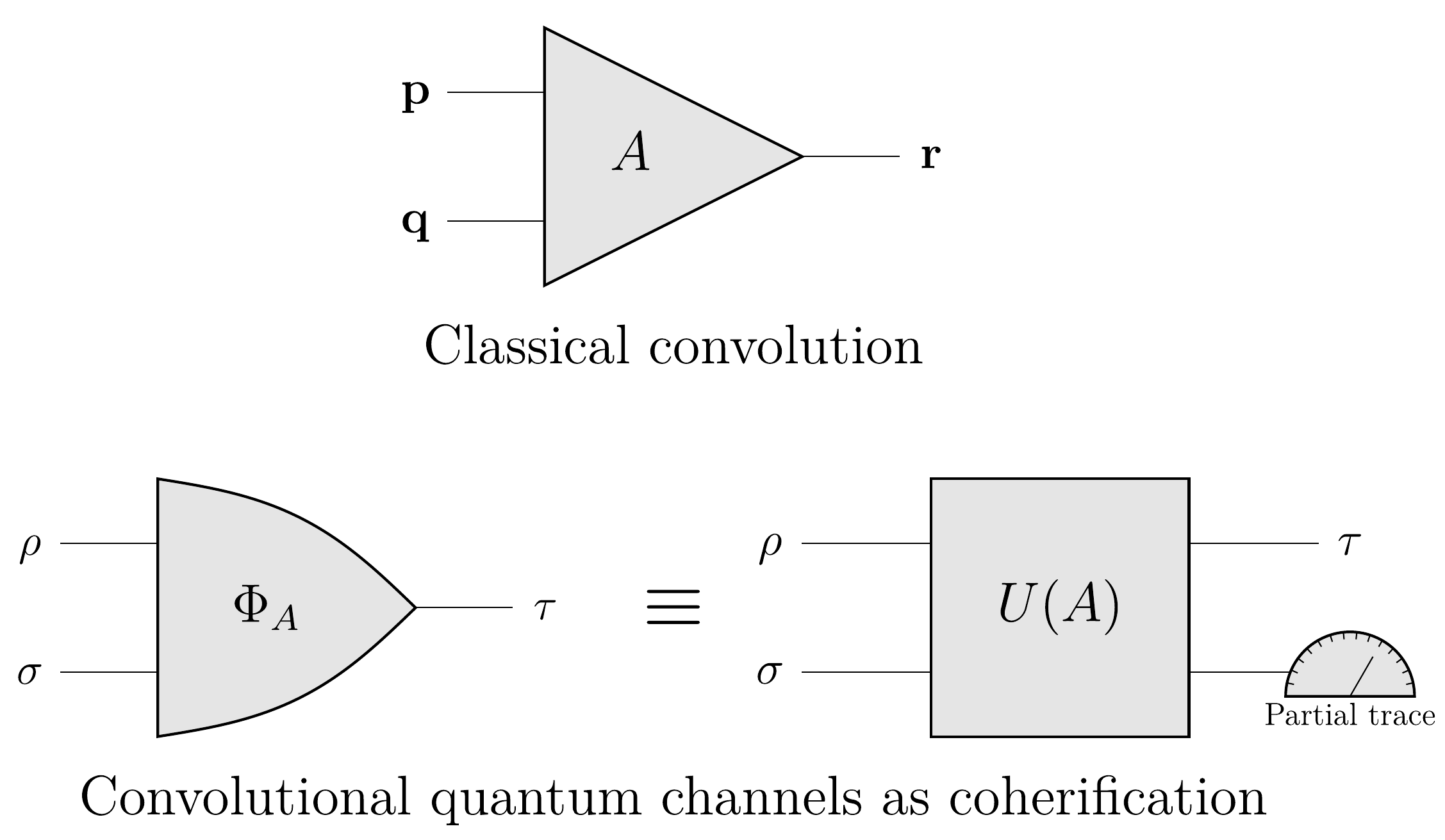}
    \caption{Classical convolution can be seen as an operation taking two probability vectors $\vb{p}$ and $\vb{q}$ as input and producing a new probability $\vb{r}$ as output. We introduce quantum convolutional channels as coherifications of tristochastic tensors $A$, which can be realized, by Stinespring representation,
    as partial traces of unitary channels $U$. }
    \label{fig:enter-label}
\end{figure}

In this work we present a construction of parametrizable 
bipartite quantum convolutional channels, based on coherifications of classical tristochastic tensors~\cite{bistron2023tristochastic}, which can be realized using bipartite unitary gates. 
First, we show
that generic channels from this family exhibit high entangling and disentangling power.
Moreover, we present necessary and sufficient conditions for our construction to provide gates with maximal \rd{values for these quantities}.
The attainability of the above conditions
is exemplified by
two novel families of bipartite 2-unitary gates
of dimension $d^2$ with $d = 7$ and $9$, parametrized by 2 and 4 nonlocal, non
trivial parameters.
\rd{Note however that presented families of 2-unitary matrices are exemplary and our framework can encompass many more classes of 2-unitary matrices.}
Furthermore, we introduce quantitative tools \rd{based on entropic coherence monotones}, which highlight differences between two locally inequivalent bipartite unitary matrices -- \rd{ranges of coherence} -- and provide estimates of these measures. Our results demonstrate that operations from our families maintain nontrivial coherence, and therefore coherence-generating abilities, under arbitrary local transformations.

The paper is organized as follows. In Section \ref{sec:sett} we invoke the concepts and notions necessary in further work, such as tristochastic tensors, entangling power and orthogonal Latin squares.
Section \ref{sec:coch_measure} introduces a new measure of coherence for unitary operators, which we later use to highlight the novelty and disparity of our construction.
Then in Section \ref{sec:tristoch} we proceed to present the entire class of convolutional channels. In Section \ref{sec:new} we present novel 2-unitary gates in dimensions $7\times 7$ and $9\times 9$, which emerged from our constructions and their parametrization goes \textit{beyond} simple phasing of matrix elements.
Finally, in the section \ref{sec:multistoch} we generalize the main concepts and results of the paper for multi-stochastic tensors and corresponding multipartite channels.
In  Section \ref{sec:outlook} we discuss obtained results and highlight important directions for further research.

The Appendix \ref{sec:2x2} is concerned with the simple example application for convectional channels as disentangling channels for an entire maximally-entangled basis.
In Appendix \ref{app:coh_measure} we describe in detail the coherence measures for unitary gates. Next, Appendix \ref{app:q_tri_call} presents calculations and proofs omitted in the main body of the paper. In Appendix \ref{app:orto6} an orthonormal matrix in dimension $6 \times 6$ with the highest known entangling power is presented.  \rd{The Appendix \ref{sec:multi_eq} discusses entangling power of multipartite unitary gates, whereas}
the Appendices \ref{app:milti_coch} and \ref{app:milti_coch_multi} serve to isolate lengthy calculations from Section \ref{sec:multistoch}. \rd{Finally in Appendix \ref{app:max_coch} we discuss the general relation between optimal coherifications of arbitrary multistochastic tensors and unitaries with maximal entangling power which goes beyond established framework based on permutation tensors.}

\textit{Relation to prior work:} Concept of coherification was introduced previously in~\cite{Korzekwa_coch}, whereas the tristochastic channels, their basic properties and connection to classical counterparts in~\cite{bistron2023tristochastic}. The concepts known from prior work are collected primarily in Section \ref{sec:sett} with a small excerpt at the beginning of Section \ref{sec:multistoch}. The remaining sections introduce novel concepts and build upon them.
Furthermore, Appendix \ref{sec:2x2}, discusses known solutions for $AME(4,4)$ from the new perspective of disentangling capabilities.

\section{Setting the scene}\label{sec:sett}

In the following subsections, we recall established notions and tools, essential to understanding the results of our paper. We start with tristochastic tensors and a method for obtaining convolutional channels by coherifying them. Then we proceed to entanglement properties of unitary operations and finish with orthogonal Latin squares and classical construction of 2-unitary gates based on pairs of such objects.

\subsection{Tristochastic tensors and coherifications}

Let us start by recalling the notions of stochasticity and tristochasticity in the classical framework which is a foundation for our work.

\begin{defi}
    A matrix $B$ is called stochastic if $B_{ij}\geq 0$ and $\sum_j B_{ij} = 1$. It is called bistochastic if $B$ and $B^T$ are both stochastic.
\end{defi}

\begin{defi}
    A tensor $A$ is called \textit{tristochastic} if $A_{ijk} \geq 0$ and $\sum_{i} A_{ijk} = \sum_{j} A_{ijk} = \sum_{k} A_{ijk} =  1$ for  any $i,j,k$.
\end{defi}

The class of tristochastic tensors, of special interest to us are permutation tensors
\rd{defined in full analogy with permutation matrices \cite{stoch_multistoch}.
\begin{defi}
    A tristochastic tensor $A$ is called a permutation tensor if its entries are restricted to zeros or ones, $A_{ijk} \in \qty{0,1}$. 
\end{defi}
As a consequence of tristochasticity, a permutation tensor $A$ has a single nonzero entry equal to one for each of its hypercolumns.}
An example of a tristochastic permutation tensor for $d = 3$ is provided below:

\begin{equation}
\label{example_permutation_tensor}
\begin{aligned}
& A = \left(\begin{matrix}
1 & 0 & 0\\
0 & 1 & 0\\
0 & 0 & 1\\
\end{matrix}\right.
\left|\begin{matrix}
0 & 1 & 0\\
0 & 0 & 1\\
1 & 0 & 0\\
\end{matrix}\right.
\left|\begin{matrix}
0 & 0 & 1\\
1 & 0 & 0\\
0 & 1 & 0\\
\end{matrix}\right)\,, \\
\end{aligned}
\end{equation}
where the reader should imagine the square sub-matrices arranged in a $3\times 3\times 3$ cube.

The action of a tristochastic tensor $A$ on a pair of probability vectors $p$, $q$ is defined analogically to the action of a stochastic matrix on 
the outer product $p\otimes q$
, i.e. $A[p,q]_i = \sum_{jk} A_{ijk} p_j q_k$. In the case when each layer of $A$ is consecutive power of permutation matrix for permutation $\sigma_i=  i+1$, as in the example \eqref{example_permutation_tensor}, the action of $A$ simplifies to
\begin{equation*}
A[p,q]_i = \sum_{jk} A_{ijk} p_j q_k = \sum_{jk} \delta_{k,i-j} p_j q_k = \sum_j p_j q_{i+j}.
\end{equation*}
\bl{This example coincides with the the ordinary convolution $[p*q]_i = \sum_j p_{j} q_{i-j}$, combined with simple preprocessing -- inverting the order of $q$ entries.} 
Thus the action of a tristochastic tensor might be interpreted as  
a generalization of the convolution of two probability vectors~\cite{bistron2023tristochastic}.

One can also define tristochasticity at the quantum level. To do so we first invoke a dynamical matrix of the channel via Choi-Jamiołkowski isomorphism~\cite{CHOI1975285, JAMIOLKOWSKI1972275}.

\begin{defi}
Let $\Omega_d$ be a set of quantum states (positive, trace one hermitian matrices) of dimension $d$.
Let $\Phi: \Omega_d \to \Omega_d$ be a quantum channel and \mbox{$|\Psi^+\ra = \sum_{i} \frac{1}{\sqrt{d}}|i\ra\otimes|i\ra$} a maximally entangled state. Then the dynamical matrix of the channel is defined as:
\begin{equation*}
D = d\cdot (\Phi \otimes \id) |\Psi^+\ra \la \Psi^+|~.
\end{equation*}
The transition from the dynamical matrix $D$ to the quantum channel is in turn defined by
\begin{equation}
\label{simple_channel}
\Phi_D(\rho) = \Phi(\rho) = \Tr_2[D (\id \otimes \rho^\top)] ,
\end{equation}
\rd{with transposition defined with respect to the basis involved in the Choi-Jamiołkowski isomorphism.}
\end{defi}
The complete positivity and trace preserving properties (CPTP) of the channel $\Phi$ are reflected by $D\geq0$ and $\Tr_1[D] = \mathbb{I}_d$, respectively.

In order to provide background, 
we start by invoking a definition of the unital channel, also known as a bistochastic channel,  in a non-standard but equivalent way:

\begin{defi}
A quantum channel $\Phi_D: \Omega_d \to \Omega_d$ defined by the dynamical matrix $D_{j_1, j_2}^{i_1,i_2}$ by
\begin{equation}
\Phi_D[\rho] = \Tr_{2}[D (\mathbb{I} \otimes \rho^\top)]~,
\end{equation}
is bistochastic if the map:
\begin{equation}
\Tr_{1}\;[D(\rho^\top \otimes \mathbb{I}) ]~,
\end{equation}
also forms a valid quantum channel.
\end{defi}
The above definition naturally generalizes to tristochastic channels. {This and further definitions in this subsection are borrowed \rd{from  or inspired by}~\cite{bistron2023tristochastic}.}

\begin{defi}\label{tri_q_def}
\rd{A quantum channel $\Phi_D: \Omega_{d^2} \to \Omega_d$} defined by dynamical matrix $D_{j_1, j_2, j_3}^{i_1, i_2, i_3}$ by

\begin{equation}
\Phi_D[\rho_2  \otimes \rho_3] = \Tr_{2, 3}[D (\mathbb{I} \otimes \rho_2^\top \otimes \rho_3^\top)]~,
\end{equation}
is called a tristochastic channel, if for any pair of density matrices $\{\rho_1, \rho_2\}$ the maps:
\begin{equation}\label{eq_tri_D2}
\Tr_{1,3}\;[D(\rho_1^\top \otimes \mathbb{I} \otimes \rho_{2}^\top) ] \text{~and~} \Tr_{1,2}\;[D(\rho_1^\top \otimes \rho_2^\top \otimes \mathbb{I} ) ],
\end{equation}
also forms a valid quantum channel.
\end{defi}

An alternative, and much wider, approach to \rd{translate} tristochasticity at the quantum level comes from the notion of coherification~\cite{Korzekwa_coch} which aims to promote classical-probabilistic objects onto a quantum level. It is achieved by following the idea that the diagonals of density matrices are treated as a classical probabilistic vector.

\begin{defi}
\label{def:q_tristoch}
A coherification of a tristochastic tensor $A$, is a channel \rd{$\Phi_A:\Omega_{d^2} \to \Omega_d$}, such that the diagonal of its dynamical matrix $D$ agrees with the elements of $A$,
\begin{equation}
\forall_{k, l, j }~~ D_{k,l,j}^{k,l,j} = A_{k l j},
\end{equation}
with the CPTP properties of the channel $\Phi_A$ guaranteed by the positivity $D \geq 0$ and trace condition $\Tr_1[D] = \id_{d^2}$.
\end{defi}
In other words, the coherification procedure is a search for preimages of the physical process of decoherence for the dynamical matrix $D$.

The recipe for coherification is ambiguous. Thus, following~\cite{bistron2023tristochastic} from the entire set of coherifications of certain permutation tensor, we chose only those with maximal $2$-norm coherence $C_2$~\cite{Korzekwa_coch}, defined by the sum of nondiagonal elements of $D$ modulus squared. It turns out that this choice is very fruitful, leading to the main object of the presented work:

\begin{defi}\label{def:convololo_channel}
Let $A$ be a tristochastic permutation tensor. The convolutional channel $\Phi_A$ associated with $A$ is a coherification of $A$ with maximal $2$-norm coherence of its dynamical matrix $D$.
Moreover \rd{we can identify convolutional channel $\Phi_A$ with a bipartite unitary matrix \cite{bistron2023tristochastic}}:
\begin{equation}
\label{U_form1}
U_{ki,lj} = A_{klj} |a_{k,l}\ra_{i}
\end{equation}
where $A_{klj}$ is the initial permutation tensor and $|a_{k,l}\ra$ are complex vectors satisfying $\la  a_{k,l}|a_{k,l'}\ra = \delta_{l,l'}$, followed by a partial trace:
\begin{equation}
\label{eq:u_coch_21}
    \Phi_A[\rho_1 \otimes \rho_2] = \Tr_2[U(\rho_1 \otimes \rho_2)U^\dagger]~.
\end{equation}
\end{defi}

The dynamical matrix for the convolutional channel associated with $A$ has a simple form
\begin{equation}
\label{D_tristoch1}
D_{k,\;l,\;j}^{k',l',j'} = A_{klj} A_{k'l'j'} \la a_{k,l}|a_{k',l'}\ra = \sum_{i} U_{ki,lj} \bar{U}_{k'i,l'j'},
\end{equation}
with unitary matrix $U$ as in equation \eqref{U_form1} and $\bar{U}$ denoting complex conjugate of $U$. \rd{We can explicitly verify that 
\begin{equation*}
D_{k,l,j}^{k,l,j} = A_{klj} A_{klj} \la a_{k,l}|a_{k,l}\ra = A_{klj}.
\end{equation*}
On the other hand, any unitary defining a coherification of $A$ via \eqref{eq:u_coch_21} must satisfy
\begin{equation*}
A_{klj} = D_{k,l,j}^{k,l,j} =\sum_{i} U_{ki,lj} \bar{U}_{ki,lj} = \sum_i |U_{ki,lj}|^2.
\end{equation*}
Thus entries of $U_{ki,lj}$ can be nonzero only if $A_{klj}$ is nonzero which, together with unitarity requirement leads to the form \eqref{U_form1}. This shows that the above definition is self-consistent. For further discussion of optimal coherifications of (multi)stochastic tensors we refer the reader to Appendix \ref{app:max_coch}.}

Therefore, from this point on, we will identify the convolutional channel $\Phi_A$ 
with corresponding unitary channel $U$. Notice that in the proposed channel each basis $\{|a_{k,l}\ra\}_{l = 1}^{d}$ can be freely adjusted to a specific task,  without losing general properties of the convolutional channel. Thus one has a lot of free parameters handily combined in $d \times d$ unitaries corresponding to the basis $\{|a_{k,l}\ra\}_{l = 1}^{d}$.

\subsection{Entangling power of unitary operations}

To characterize the properties convolutional channels $\Phi_A$ 
discussed above, or equivalently $d^2$ unitary matrices \eqref{U_form1}, from the perspective of their ability to entangle and disentangle quantum systems, we recall the framework of entangling power of bipartite unitary gates and related notions.

\begin{defi}\cite{Zanardi2000entangling}\label{ep_def}
Consider a unitary operation $U$ acting on a bipartite space $\mathcal{H}_{AB} \equiv \mathcal{H}_{A}\otimes \mathcal{H}_B$ \rd{both} with local dimension $d$. The entangling power $e_p$ is defined as the average entanglement created by $U$ when acting on a pure product state $\ket{\psi_A}\otimes\ket{\psi_B}\in\mathcal{H}_{AB}$,
\begin{equation}
    e_p(U) =\! \frac{d+1}{d-1} \ev{\mathcal{E}\qty[ U(\ket{\psi_A}\otimes\ket{\psi_B})]}_{\ket{\psi_A},\ket{\psi_B}}
\end{equation}
where the average $\ev{\cdot}_{\ket{\psi_A},\ket{\psi_B}}$ is taken over Haar measure with respect to both subspaces and $\mathcal{E}$ is an entanglement measure. 
\end{defi}
In what follows we will use the linear entropy as the measure of entanglement, $\mathcal{E}\qty(\ket{\psi}) = 1 - \Tr(\rho_A^2)$, with \mbox{$\rho_A \equiv \tr_B\op{\psi}$}. The normalization in Definition \ref{ep_def} follows from the requirement $e_p \in [0,1].$ 
~A closed formula for entangling power was obtained in~\cite{zanardi2001entanglement}
\begin{equation}
e_p(U) = \frac{\mathcal{E}\qty(\ket{U}) + \mathcal{E}\qty(\ket{US}) - \mathcal{E}\qty(\ket{S})}{\mathcal{E}(\ket{S})}
\end{equation}
where the state $\ket{U} = \qty(U\otimes I)\ket{\Psi_+}$ is defined according to the Choi-Jamiołkowski isomorphism \rd{with $\ket{\Psi_+} = \frac{1}{\sqrt{d}} \sum_{i=1}^d \ket{ii}$ being the Bell state defined on $\mathcal{H}_{AB}^{\otimes 2}$} and $S = \sum_{i,j=1}^d \op{ij}{ji}$ is the SWAP operator. \rd{In this and following formulas whenever we consider entanglement of ``vectorized unitary'' $|U\ra$, the bipartition $\mathcal{H}_A^{\otimes2} \otimes \mathcal{H}_B^{\otimes 2}$ is assumed. Thus entanglement of identity $\mathcal{E}(|I_{AB}\rangle) = \mathcal{E}(|I_A\rangle\otimes |I_B\rangle)$ is equal $0$ and the entanglement of swap $\mathcal{E}(|S\ra) = (d^2-1)/d^2$ is maximal.}

A unitary operation $U$ is called \textit{2-unitary}, if the entangling power is maximal, $e_p = 1$ or, equivalently, if partial transpose $(U^{\Gamma})_{ki,lj} = U_{li,kj}$ and reshuffling $(U^R)_{ki,lj} = U_{kl,ij}$ are also unitary~\cite{permutatrion_entangling_power}. 

Further two quantities of interest are \textit{gate typicality} $g_t$ (complementary to $e_p$) and \textit{disentangling power} $d_p$\footnote{Note that this notion of disentanglement should not be confused with a substantially different concept present within the CNN community~\cite{chen2018isolating}.} (defined as average entanglement left after the action of the gate on an arbitrary maximally entangled state) of a bipartite unitary matrix. They can both be given by the following formulae~\cite{CLARISSE2007400, jonnadula2017impact}
\begin{equation}
    \begin{aligned}
        g_t(U) & = \frac{\rd{\mathcal{E}\qty(\ket{U}) - \mathcal{E}\qty(\ket{US}) + \mathcal{E}\qty(\ket{S})}}{2\mathcal{E}(\ket{S})}, \\
        d_p(U) & = \frac{1}{d-1}e_p(U). 
    \end{aligned}
\end{equation}
\rd{Disentangling power characterizes how much local subsystem are ``exchanged'' by unitary $U$. It is best understood by taking a product state $\rho_A\otimes\rho_B$ and considering two extreme cases. First, acted upon by product unitary, with $g_t = 0$, the output of $A$ party is full determined by $\rho_A$, and likewise for $B$. In the second case, we use SWAP gate with $g_t = 1$, and it is immediate to note that output $A$ is determined in terms of state $\rho_B$ and \textit{vice versa}, showing complete interchange of influence. Since for large local dimensions the distribution of $g_t$ is strongly concentrated at the mean value $\la g_t\ra = 1/2$, substantial deviation from the mean, which also implies small entangling power \cite{jonnadula2017impact}, indicate that the action of bipartite unitary is similar (up to local unitary operations) to identity or swap.} 

Note also that entangling power and disentangling power are proportional~\cite{CLARISSE2007400}, so the channels with maximal entangling power have also maximal disentangling power.

All the aforementioned properties of bipartite unitaries are invariant under local operations. For example, if for some unitary $U$ one has maximal entangling power, $e_p(U) = 1$, then for any local rotations $v_1, v_2, v_1', v_2'$, $e_p[(v_1 \otimes v_2)U(v_1' \otimes v_2')] = 1$. The work~\cite{Suahil_invariants} presents a helpful tool to verify whether two unitaries $U$ and $U'$ can be connected by \rd{local operations in such a way}.

\begin{defi}\cite{Suahil_invariants}
Let $U_{kl}^{ij}$ be a 2-unitary, and $\sigma, \tau,\rho,\lambda$ be $n$-element permutations. Then the invariant of the local rotations $I_{\sigma, \tau,\rho,\lambda}(U)$ is given by:
\begin{equation} \label{eq:sukhail_invariant}
I_{\sigma, \tau,\rho,\lambda}(U) = U_{k_1l_1}^{i_1j_1}\cdots U_{k_nl_n}^{i_nj_n} \;\; \overline{U}_{k_{\rho(1)} l_{\lambda(1)}}^{i_{\sigma(1)} j_{\rho(1)}} \cdots \overline{U}_{k_{\rho(n)} l_{\lambda(n)}}^{i_{\sigma(n)} j_{\rho(n)}}
\end{equation}
where the sum over repeated indices is assumed.
\end{defi}

Thus if two unitaries $U$, and $U'$ have different invariants, then one cannot be transformed into the other by local pre- and post-processing. Moreover, if all invariants $I_{\sigma, \tau,\rho,\lambda}(\cdot)$ for $U$ and $U'$ have the same values, then they can be connected by local operations~\cite{Suahil_invariants}.

\subsection{Orthogonal Latin squares}

The last construction we refer to is a notion connected to combinatorics --  \textit{Latin squares}~\cite{LatinHypercubes}.

\begin{defi}
A Latin square $L$ of dimension $d$ is a $d \times d$ matrix with entries from the set $[d] := \{1, \cdots,d\}$ such that in each row and each column contains all the elements of the set $[d]$.
\end{defi}

In other words, each column and each row contain all the numbers from $1$ to $d$ without repetitions, as can be seen in an exemplary Latin square of size 3 below,

\begin{equation}
\label{example_Latin_square}
    L = \mqty(
        1 & 2 & 3 \\
        3 & 1 & 2 \\
        2 & 3 & 1).
\end{equation}

To obtain a permutation tensor $A$ from the Latin square $L_{jk}$ one may simply set $A_{ijk} = \delta_{i,L_{jk}}$.
Examples of corresponding permutation (tri)stochastic tensor and Latin square are \eqref{example_permutation_tensor} and \eqref{example_Latin_square}.

Together with the concept of Latin squares comes also the notion of their \textit{orthogonality}.

\begin{defi}
Two Latin squares, $L$ and $M$, are said to be orthogonal if the set of pairs $\{(L_{ij} , M_{ij} )\}$ has $d^2$ distinct elements.
\end{defi}

It is impossible to find two orthogonal Latin squares of dimension 2. However, for the exemplary Latin square~$L$~\eqref{example_Latin_square} we can give an orthogonal Latin square

\begin{equation}
    M = \mqty(
        1 & 2 & 3 \\
        2 & 3 & 1 \\
        3 & 1 & 2).
\end{equation}

For any prime and prime-power dimension $d$ there exists a construction based on finite fields, yielding exactly $d-1$ pairwise-orthogonal Latin squares~\cite{denes1974Latin}. It is known~ that for 
any $d\ge 7$ there exist at least two orthogonal Latin squares.

It has been shown in~\cite{CLARISSE2007400} that given two orthogonal Latin squares $L$ and $M$ in dimension $d$ one can construct a 2-unitary permutation matrix $P_{d^2}$ with  $e_p = 1$ in a straightforward way,
\begin{equation}
\label{U_orto_Latin_s}
    P_{d^2} = \sum_{l,j=1}^d \op{L_{lj},M_{lj}}{l,j}.
\end{equation}

Such a construction is a nice example convolutional channel associated with permutation tensor \mbox{$A_{klj} = \delta_{k, L_{lj}}$}, with basis vectors $|a_{k,l}\ra$ given by:
\begin{equation}
\label{ort_Latin_basis}
    \left(a_{k,l}\right)_i = \delta_{i,M_{lj}} \text{ with } j \text{ such that } k = L_{lj}.
\end{equation}

\section{Coherence range for bipartite unitaries}\label{sec:coch_measure}

To quantitatively distinguish the presented construction from previously known ones we adapt the notion of coherence to describe unitary channels. 
In order to quantify a property akin to coherence for unitary operations we will consider average coherence generated by the action of a unitary matrix on the basis vectors, \rd{quantified} by $\alpha$-R{\'e}nyi entropies applied to the amplitudes of the resulting states\rd{, which have been considered as coherence monotones for states eg. in \cite{ChitambarGour2016, Rastegin2016, zhu2017coherence, Audenaert2015}.}

Let us take an arbitrary pure state $\ket{\psi}\in
\mathcal{H}_{d\times d}$. Its coherence with respect to the basis defined by a unitary matrix $U$ may be characterized as
\begin{equation}
\label{eq_whatever1}
H_\alpha(\ket{\psi};U) = \frac{1}{1-\alpha}\log(\sum_{i=1}^D \abs{\mel{i}{U}{\psi}}^{2\alpha}),
\end{equation}
where $D = d^2$. \rd{It can be seen as measured R{\'e}nyi $\alpha$-entropy. Additionally, in restriction to pure states it has been shown equivalent to quantum R{\'e}nyi $\beta$-entropy relative to incoherent states with $\beta = \frac{\alpha}{2\alpha - 1}$ \cite{ChitambarGour2016}.} 
For $\alpha \in \qty{0, 2, \infty}$  exponentials of these entropies,
\begin{equation}
\label{eq_whatever2}
\begin{aligned}
    S_\alpha(\ket{\psi};U) & = \exp[(1-\alpha)H_\alpha(\ket{\psi};U)] \\
    & = \sum_{i=1}^D \abs{\mel{i}{U}{\psi}}^{2\alpha}
\end{aligned}
\end{equation}
turn out to have simple interpretations related to the number of nonzero elements, linear entropy and maximal element -- for details see Appendix \ref{app:coh_measure}. \rd{It is important to note, however, that R{\'e}nyi entropy in the limit $\alpha \to 0$ is not a coherence monotone in a strict sense and can be treated at best as a coherence indicator, appropriate for pure states, as they are incoherent if and only if they have a single non-zero element.} 

Using the above notion, we may define a measure of coherence of a unitary matrix $U$ as an average coherence of computational basis vectors:
\begin{equation}
\label{eq_whatever3}
    S_\alpha(U) = \frac{1}{D}\sum_{j = 1}^D S_\alpha(\ket{j};U).
\end{equation}

Bipartite 2-unitary matrices offer an important degree of freedom to make use of, as two such matrices $U$ and $U'$ are considered {\sl locally equivalent} if
there exist local unitaries $v_1$, $v_2$, $v_1'$, $v_2'$ such that \mbox{$(v_1\otimes v_2) U (v_1'\otimes v_2') = U'$}. Therefore each two-unitary corresponds to the entire range of $S_{\alpha}(U)$ values:

\begin{widetext}
\begin{equation}
\label{range_def1}
    {\operatorname{range}}\big(S_\alpha(U)\big) = \qty{\min_{V, V' \in U(d)^{\otimes 2}} S_\alpha(VUV'),\max_{V, V' \in U(d)^{\otimes 2}} S_\alpha(VUV')}.
\end{equation}
\end{widetext}
\bl{with extrema taken over tensor products of local unitary operators from $U(d)$ i.e. $V = v_1\otimes v_2$, $V' = v_1' \otimes v_2'$.}

Let us consider two simple examples, starting with a case of local unitary operation, $U = U_A\otimes U_B$, evaluated with respect to the linear entropy. One easily finds that

\begin{equation}
    \operatorname{range}\big(S_2(U_A\otimes U_B)\big) = \qty{\frac{1}{d^2},1}.
\end{equation}
The maximum is found by setting $V = U^{\dagger}$ and keeping $V' = \mathbb{I}$, while minimum is found when \mbox{$V = F_d^{\otimes2} U^{\dagger}$}, where $F_d$ is the Fourier matrix of dimension $d$, \mbox{$\qty(F_d)_{jk} = \frac{1}{\sqrt{d}}\exp(\frac{2\pi i}{d}jk)$}. 

The same holds also if $U = P$ is an arbitrary permutation matrix and, in particular, a 2-unitary matrix constructed from two orthogonal Latin squares
\begin{equation}
    \operatorname{range}(S_2(P)) = \qty{\frac{1}{d^2},1}~,
\end{equation}
which are values for a ``bare'' permutation ($V = V' = \mathbb{I}$) and Fourier matrices ($V = F_d^{ \otimes2}$, $V' = \mathbb{I}$), respectively. For a generic $U$ such range is not \textit{a priori} trivial, in particular the maximal value points towards the non-vanishing coherence of the matrix, understood as ability of a given gate to generate nonzero coherences for any choice of local bases.

In some scenarios one may desire that in their circuit the average coherence of the output would lie in some particular range, independent of local pre- and post-processing of input basis states. Such a case corresponds exactly to a narrow coherence range.

For a more detailed discussion of this construction, further motivations and its properties, we encourage the readers to consult Appendix \ref{app:coh_measure}.

\section{Properties of Convolutional
Channels}\label{sec:tristoch}

In this section, we study the general features of an entire class of convolutional channels.
As our main quantities of interest, we choose entangling power $e_p$ and gate typicality $g_t$, because they allow us to \rd{characterize properties important from} the perspective of handling entangled states.

Entangling power $e_p$ of a \rd{unitary} $U$ of size $d^2 \times d^2$ describes, how much the outcome $U (|\psi\ra \otimes |\phi\ra)$ is entangled on average for random input pure states $|\psi\ra$ and $|\phi\ra$. Hence, after the partial trace, it gives us insight into how much the result of the channel $\Phi_A$ becomes mixed.
On the other hand, the gate typicality $g_t$ describes the degree of subsystem exchange which, taking into account the partial trace, reveals which subsystem affects the output more. 

Let us now consider $U$ corresponding to a convolutional channel $\Phi_A$ defined by a basis $\qty{\ket{a_{k,l}}}$. By simple calculations, one obtains:
\begin{equation}\label{max_ep_cond}
\begin{aligned}
\mathcal{E}\qty(\ket{U}) & =1 - \frac{1}{d^4} \!\!\sum_{k,l,k',l',j}\!\!\!\! A_{klj} A_{k'l'j} |\la a_{k,l}| a_{k',l'}\ra|^2 , \\
\mathcal{E}\qty(\ket{US}) & = 1 - \frac{1}{d^4} \sum_{k,k',l} |\la a_{k,l}| a_{k',l}\ra|^2. \\
\end{aligned}
\end{equation}
This allows us \rd{to} establish bounds for entangling power and gate typicality for the aforementioned unitary matrices:
\begin{equation}
\begin{aligned}
\label{ep_bound}
1 - \frac{1}{d+1} & \leq e_{p}(U) \leq 1, \\
\frac{1}{2} - \frac{1}{2d + 2} & \leq g_t(U) \leq \frac{1}{2} + \frac{1}{2d + 2}.
\end{aligned}
\end{equation}
Moreover, we obtained also the average values  of entangling power and gate typicality over all convolutional channels:
\begin{equation*}
\la e_p(U) \ra_{|a\ra} = 1 - \frac{2}{d^2 + d} ~,~~~~~ \la g_t(U)\ra_{|a\ra} = \frac{1}{2} ~.
\end{equation*}
Note that the lower bound for entangling power $e_p$ for unitaries corresponding to convolutional channels coincides with the upper bound of entangling power for block diagonal bipartite unitary matrices with $d$ blocks of size $d\times d$~\cite{dual_uni_to_bernoulli_circuits}.
The exact derivation of those results is provided in Appendix  \ref{app:q_tri_call}. The minimal values of entangling power $e_p$ are obtained for example when $|a_{k,l}\ra = |a_{k',l}\ra$ for any $k,k'$ which corresponds to minimal gate typicality, or for $|a_{k,l}\ra_i = A_{kli}$, which corresponds to maximal gate typicality, see Fig.~\ref{fig:ep}.

One can also consider the case when all the bases $\{|a_{k,l}\ra\}_{l = 1}^d$ are mutually unbiased (MU)~\cite{mub_paper} i.e. for any two vectors from different basis $|a_{k,l}\ra$ and $|a_{k',l}\ra$ the overlap between basis vectors is given by
\begin{equation}
\label{MU_def}
    \abs{\braket{a_{kl}}{a_{k'l'}}}^2 = \frac{1}{d}.
\end{equation}
Taking $\{|a_{k,l}\ra \}$ to be a set of MU bases (MUBs),
and defining, in turn, $U_{\text{MUB}}$ as the unitary representation of convolutional channel, resulting from Eq.~\eqref{U_form1} with such choice, the entangling power and gate typicality are given by: 
\begin{equation*}
 e_p(U_{\text{MUB}})  = 1 - \frac{2}{d^2 + d} ~,~~~~~  g_t(U_{\text{MUB}}) = \frac{1}{2} ~,
\end{equation*}
which are exactly the average values of $e_p$ and $g_t$.

\begin{figure}[h]
    \includegraphics[height=4.2in]{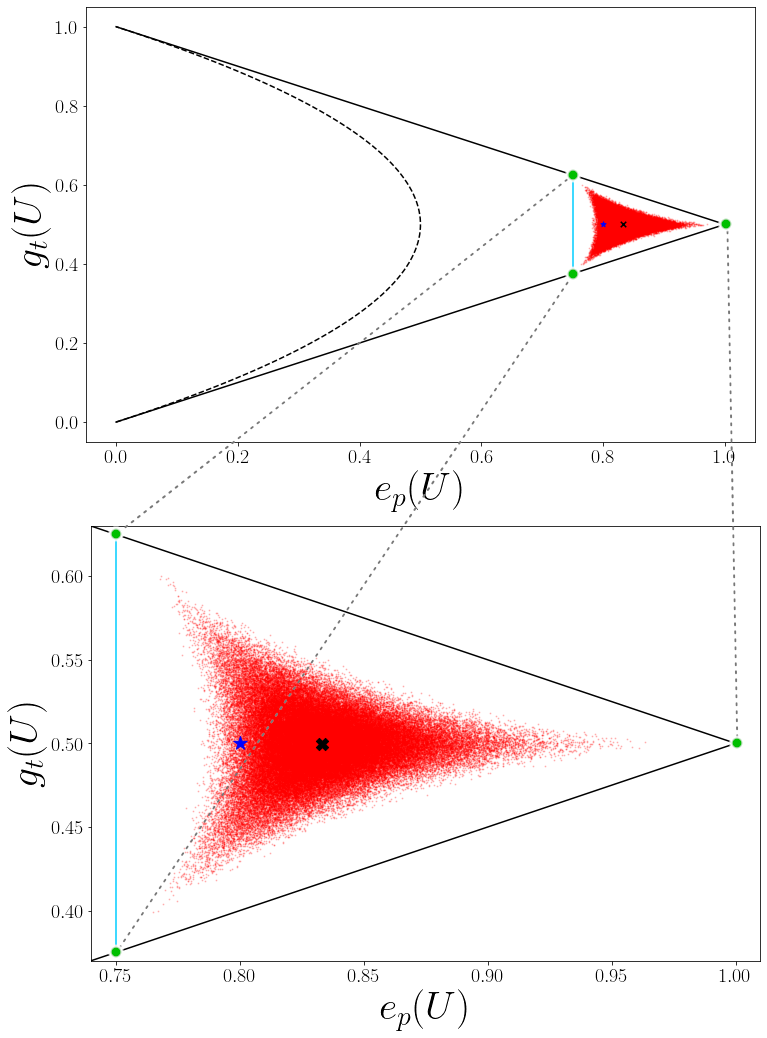}
    \caption{Plots of entangling power and gate typicality for dimension $d = 3$. Black lines correspond to the general bounds for unitary channels (dashed line 
    generated by powers of the swap operation $S^\kappa$), and \rd{cyan} lines mark the lower bound of entangling power for convolutional channels \eqref{ep_bound}.  The cloud of red dots corresponds to a random choice of $10^5$ bases $\{|a_{(kl)}\ra\}_{l = 1}^d$, with each basis taken from Haar measure on $U(d)$, green points to the extremal cases, black cross corresponds to the MUB case and blue star to entangling power and gate typicality averaged over all unitary matrices of from CUE of size $d^2$, see \eqref{standard_average_ep}. 
    The lower plot shows a magnification of the entire region presented in the upper panel. }
    \label{fig:ep}
\end{figure}

The case of maximal entangling power is the most complex one. Before discussing it in detail let us reestablish the connection with (fully quantum) tristochasticity.

\begin{theo}\label{theo_ep_stoch}
A unitary matrix $U$ corresponding to a convolutional channel $\Phi_A$ has the maximal entangling power $e_p(U) = 1$ if and only if the channel $\Phi_A$ is tristochastic. 
\end{theo}

We present a proof of this theorem in Appendix \ref{app:q_tri_call} for the clarity of the text. \rd{Furthermore, in Appendix \ref{app:max_coch} we generalize the above theorem for arbitrary tristochastic tensors, for which the coherification is no longer explicit.}
To capture the connection between entangling power $e_p$ and (quantum) tristochasticity let us alter the notation $|\vb{a}_{klj}\ra:= A_{klj}|a_{kl}\ra$ (with no summation involved)  to create the symmetry between indices $k,l,j$. Then the condition for unitarity of $U$ gives 
\begin{equation}
\label{multi_tri1}
\forall_k\forall_{ll'j'j}~~ \la \vb{a}_{klj}|\vb{a}_{kl'j'}\ra = \delta_{ll'}\delta_{jj'}~,
\end{equation}
and the maximal entangling power imposes, from \eqref{max_ep_cond1},

\begin{align}
\label{multi_tri2}
&\forall_j \forall_{kk'll'}~~ \la \vb{a}_{klj}|\vb{a}_{k'l'j}\ra = \delta_{kk'}\delta_{ll'}~, \\
\label{multi_tri3}
&\forall_l \forall_{kk'jj'}~~ \la \vb{a}_{klj}|\vb{a}_{k'lj'}\ra = \delta_{kk'}\delta_{jj'}~.
\end{align}

On the other hand, one may interpret \eqref{multi_tri2} or \eqref{multi_tri3} as the conditions for unitarity of the following matrices

\begin{equation}
\label{new_tri_u}
U_{ji,kl}' = |\vb{a}_{klj}\ra_i \text{ and } U_{li,jk}'' = |\vb{a}_{klj}\ra_i~,
\end{equation}
\rd{each with maximal entangling power as well}. These, composed with the partial trace, gives the \rd{complementary} tristochastic channels as in the Definition \ref{tri_q_def}.

More intuitively, one can imagine a $d\times d \times d$ cube populated by $d^2$ states $\ket{a_{kl}}$ placed in positions corresponding to nonzero $A_{klj}$ elements in such a way that each horizontal, vertical and ``depth'' slice contains an orthonormal basis on $\mathcal{H}_d$. This can be satisfied because there is only one non-zero state in each column, row and ``depth-row'', since $A_{klj}$ is a permutation tensor.

The allowed region of entangling power $e_p$ and gate typicality $g_t$ for the convolutional channel unitary $U$ \eqref{U_form1} together with extremal, distinct and randomly sampled unitaries are presented in Fig.~\ref{fig:ep}. 

\begin{figure}[h]
        \includegraphics[height=2.1in]{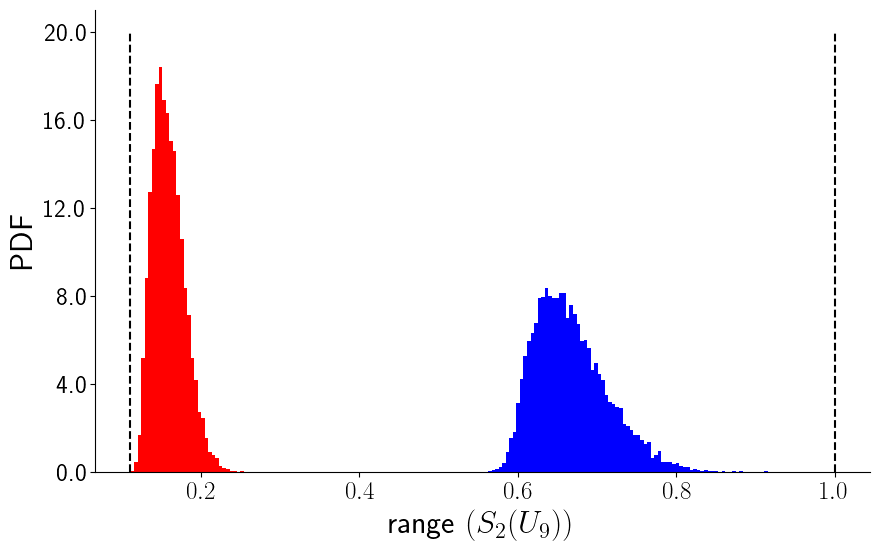}
    
        \includegraphics[height=2.1in]{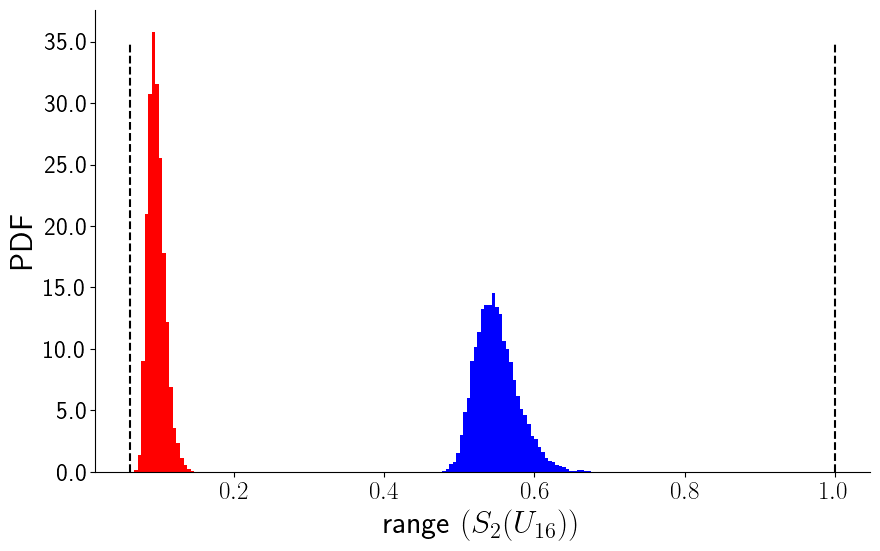}
    \caption{
    Probability density functions (PDFs)
    of estimated ranges of coherence, $\text{range}(S_{2}(U))$, for convolutional channels in dimensions $3\times 3$ on top and $4 \times 4$ in the bottom. In each dimension, we generated coherifications by drawing $10^4$ random basis $|a_{(kl)}\ra$ and accumulated estimated lower bound $S_{2}^{min}(U)$ (red) and upper bound $S_{2}^{max}(U)$ (blue) of $\text{range}(S_{2}(U))$. Dotted lines correspond to the general bounds of $S_2$ covered by permutations. }
\end{figure}

\section{New classes of genuinely quantum 2 unitary gates}\label{sec:new}
In this section, we introduce new continuous families of 2-unitary gates emerging from convolutional channels. 

The dimension $d = 2$ was already discussed in ~\cite{bistron2023tristochastic}. For the peculiar dimension $d = 6$, for which first 2-unitary matrix was found only recently~\cite{36_officers_of_Karol}, despite extensive numerical searches, we did not find a coherification of any permutation tensor with entangling power larger than $e_p = \frac{208 + \sqrt{3}}{210}\approx0.9989$ achieved in~\cite{bruzda2022structured}. In Appendix \ref{app:orto6} we present an exemplary \textit{orthogonal} matrix corresponding to convolutional channel, that achieves this bound and could serve as a candidate for the most entangling orthogonal gate of order $36$. 
One might think, that the construction \eqref{U_orto_Latin_s} encompasses all the convolutional channels with maximal (dis)entangling power since it is the case for $d = 3,4,5$, which can be verified by direct (exhaustive) calculations. Nevertheless, in dimension $d\geq 7$ it is possible to find nontrivial sets of bases $\ket{a_{k,l}}$ which generate unitary channels with maximal entangling power.

In order to present our construction let us first rewrite (\ref{multi_tri1}--\ref{multi_tri3}) as a conditions of unitarity for $3d$ matrices, defined by
\begin{equation}
\begin{aligned}
\label{uni_ame_cond}
&(V_k)_{li} :=\sum_j\ket{\vb{a}_{klj}}_i =  \ket{a_{kl}}_i ~, \\
&(V'_k)_{ji}  := \sum_l\ket{\vb{a}_{klj}}_i~, \hspace{1 cm} (V''_l)_{ji} = \sum_k \ket{\vb{a}_{klj}}_i~,  \\
\end{aligned}
\end{equation}
where in each sum there is in fact only one nonzero vector due to $A_{klj}$ being a permutation tensor.

Thanks to the structure above we may leverage an algorithm akin to the Sinkhorn approach used in~\cite{36_officers_of_Karol}. Cycling through the sets $\qty{V_k},\,\qty{V'_k},\,\qty{V''_k}$, we orthonormalize each element of the set using polar decomposition. The standard complexity of polar decomposition of a matrix of size $D$, using SVD, behaves as $O\qty(D^3)$. When full matrix $U$ is considered as in~\cite{36_officers_of_Karol} ($D = d^2$),  this results in complexity $O\qty(d^6)$. The proposed approach relies on the decomposition of $d$ square matrices $V_k$, of size $d$ each, resulting in complexity  $O\qty(d^4)$. 

Furthermore, our approach limits the dimensionality of the space explored and as such, should be faster to converge than the methods employed in the full space\footnote{Assuming that a solution exists within the limited space, which is not guaranteed.}.
\rd{It is so, because the} number of complex parameters, thanks to the choice of the tristochastic tensor $A_{klj}$ and local freedom to fix $V_1 = \id_d$, is equal to $d(d-1)(d-2)$, whereas the previous approaches have, in general, $d^4$ parameters, again highlighting reduction of complexity.

In particular, by focusing our attention on cyclic permutation tensors $A_{klj} = \delta_{k,l \oplus j}$ 
and their coherifications constructed using bases with cyclic amplitude structures,
\begin{equation}
    \abs{\braket{i}{a_{k,l}}}^2 =
    \abs{\braket{i\oplus n}{a_{k,l \oplus n}}}^2, 
\end{equation}
where $\oplus$ denotes the addition modulo $d$,
together with fixing the first basis as the computational basis, \mbox{$\ket{a_{1,l}} = \ket{l}$},
we were able to derive two novel continuous families of 2-unitary matrices of dimension $d^2$. In the following subsections we will discuss a 2-parameter family for dimension $d=7$ and a family for dimension $d=9$ characterized by two $3\times3$ bases with cyclic structure. 
The aforementioned new classes of $2$-unitary matrices, which will be presented in the following subsections, are genuinely quantum~\cite{36_officers_of_Karol}, in the sense, that they are not locally equivalent to any permutation tensor $P_{d^2}$ 
from equation \eqref{U_orto_Latin_s}:
\begin{equation*}
U_{d^2}\neq (v_1\otimes v_2)P_{d^2}(v_1'\otimes v_2')    
\end{equation*}
for any local pre- and postprocessing.

Both of the presented classes in dimensions $7$ and $9$, similarly as permutations~\cite{Suahil_invariants}, can be further extended by the multiplication 
by a diagonal unitary matrix with arbitrary phases, giving additional $36$ and $65$ nonlocal parameters, respectively, (corresponding to a number of phases that cannot be removed by local transformation), which we call the "simple" phasing of matrix elements.

\subsection{Dimension 7}

The two parameter family of 2-unitary operations $U_{49}(\phi_1,\phi_2)$ of order $d^2$ found for $d = 7$ is characterized by seven bases $\ket{a_{k,l}}$, which are presented in Fig.~\ref{fig:dimension_7}. After fixing the first basis as equal to the computational basis using local rotation, the remaining six bases are particularly elegant and can be characterized by just two non-zero amplitudes: $\sqrt{1/7}$ and $\sqrt{2/7}$, nine distinct constant phases from the set $\qty{\pm\arccos(\pm\frac{3}{4}),\,\pm\arccos(\frac{\pm1}{2\sqrt{2}}),\pi}$, and two free phases $\phi_1,\phi_2\in[0,2\pi)$, as summarised in Fig. \ref{fig:dimension_7}. 

In order to compare this family to the already known solutions based on Latin squares we have resorted to the 4-th order invariant $I_{\sigma, \tau, \rho, \lambda}$ defined in \eqref{eq:sukhail_invariant} with the permutations
\begin{align*}
   \sigma & = \text{Id},  & \tau = (12)(34), \\
   \rho & = (13)(24), & \lambda = (14)(23),
\end{align*}
in line with the invariant used in~\cite{Suahil_invariants} for the case of 36 officers. The analytical expression 
is given by

\begin{widetext}
\vspace{-0.05 cm}
{\scriptsize
    \begin{equation}
        \begin{aligned}
        I_{\sigma, \tau, \rho, \lambda}(U_{49}) = & \frac{1}{7} (9614-4 \sqrt{7} \sin(\phi _1)-4 \sqrt{7} \sin(\phi _1-\phi _2)-3 \sqrt{7} \sin(2 \phi _2)-2 \sqrt{7} \sin(\phi _1+\phi _2)-6 \sqrt{14} \sin(\phi _1+\phi _2)-2 \sqrt{7} \sin(2 \qty(\phi _1+\phi _2)) \\
        & -4 \sqrt{14} \sin(\phi _1+2 \phi _2)+6 \sqrt{7} \sin(2 \phi _1)+14 \sqrt{7} \sin(\phi _2)+2 \sqrt{7} \sin(2 \qty(\phi_1+2 \phi _2))+2 \sqrt{14} \sin(\phi_1) +2 \sqrt{14} \sin(\phi _2) \\
        & +\sqrt{14} \sin(2 \phi _2)+2 \sqrt{14} \sin(2 \qty(\phi _1+\phi _2))+12 \sqrt{14} \sin(2 \phi _1+\phi _2)+18 \cos(2 \phi _1)-2 \qty(5 \sqrt{2}+2) \cos(\phi _1)-68 \cos(\phi _1-\phi _2) \\
        & -58 \cos(\phi _2)-3 \sqrt{2} \cos(2 \phi _2)+17 \cos(2 \phi _2)-10 \sqrt{2} \cos(\phi _1+\phi _2)+2 \cos(\phi _1+\phi _2) -10 \sqrt{2} \cos(2 \qty(\phi _1+\phi _2)) \\
        & -18 \cos(2 \qty(\phi _1+\phi _2)) -4 \sqrt{2} \cos(2 \phi _1+\phi _2)-8 \sqrt{2} \cos(\phi _1+2 \phi _2)-6 \cos(2 \qty(\phi _1+2 \phi _2))+10 \sqrt{2} \cos(\phi _2)).
        \end{aligned}
    \end{equation}}
    \vspace{-0.3 cm}
\end{widetext}

Global minima and maxima of this function can be easily found, thus bounding it by
\begin{equation}
    1347.84 \leq I_{\sigma, \tau, \rho, \lambda}(U_{49}) \leq 1403.66
\end{equation}
with the lower bound being well above the same invariant calculated for any 2-unitary permutation $P_{49}$ in dimension $d = 7$, equal to $I_{\pi,\tau,\rho,\lambda}(P_{49}) = 7^3 = 343$. This is enough to demonstrate that the entire family $U_{49}(\phi_1,\phi_2)$ is locally inequivalent to any 2-unitary permutation $P_{49}$.

In order to bound the coherence $\text{range}(S_2(U_{49}))$ from the inside we calculated $S_2(U_{49})$ as approximation of upper limit and $S_2(F^{\otimes 2}\,U_{49})$ with the Fourier $F_{jk} =\frac{1}{\sqrt{7}} e^{\frac{2i\pi}{7}jk}$ as the lower limit of the approximation. Interestingly, for all $\phi_1, \phi_2$ we find $S_2(U_{49}) = \frac{115}{343}$. After maximizing $S_2\big(F^{\otimes 2}\,U_{49}(\phi_1,\,\phi_2)\big)$ over the parameters we find that for all $\phi_1,\,\phi_2$

\begin{equation}
    \text{range}\big(S_2(U_{49})\big) \supset \qty[0.042,\,\frac{115}{343}].
\end{equation}
Surprisingly, attempts to improve these bounds using simulated annealing techniques do not show any improvement over the simple approach we have used. The table of estimated coherence ranges obtained for other entropies and comparison with permutations is presented in Appendix \ref{app:coh_measure}.

The obtained solution provides also an explicit recipe to generate $\text{AME}(4,7)$ states of four systems of dimension $7$ by $|\text{AME}(4,7)\ra = \sum_{i,j = 1}^7 |i,j\ra \otimes ~ U_{49} |i,j\ra~.$

\begin{figure}[h!]
    \centering
    \includegraphics[width=\linewidth]{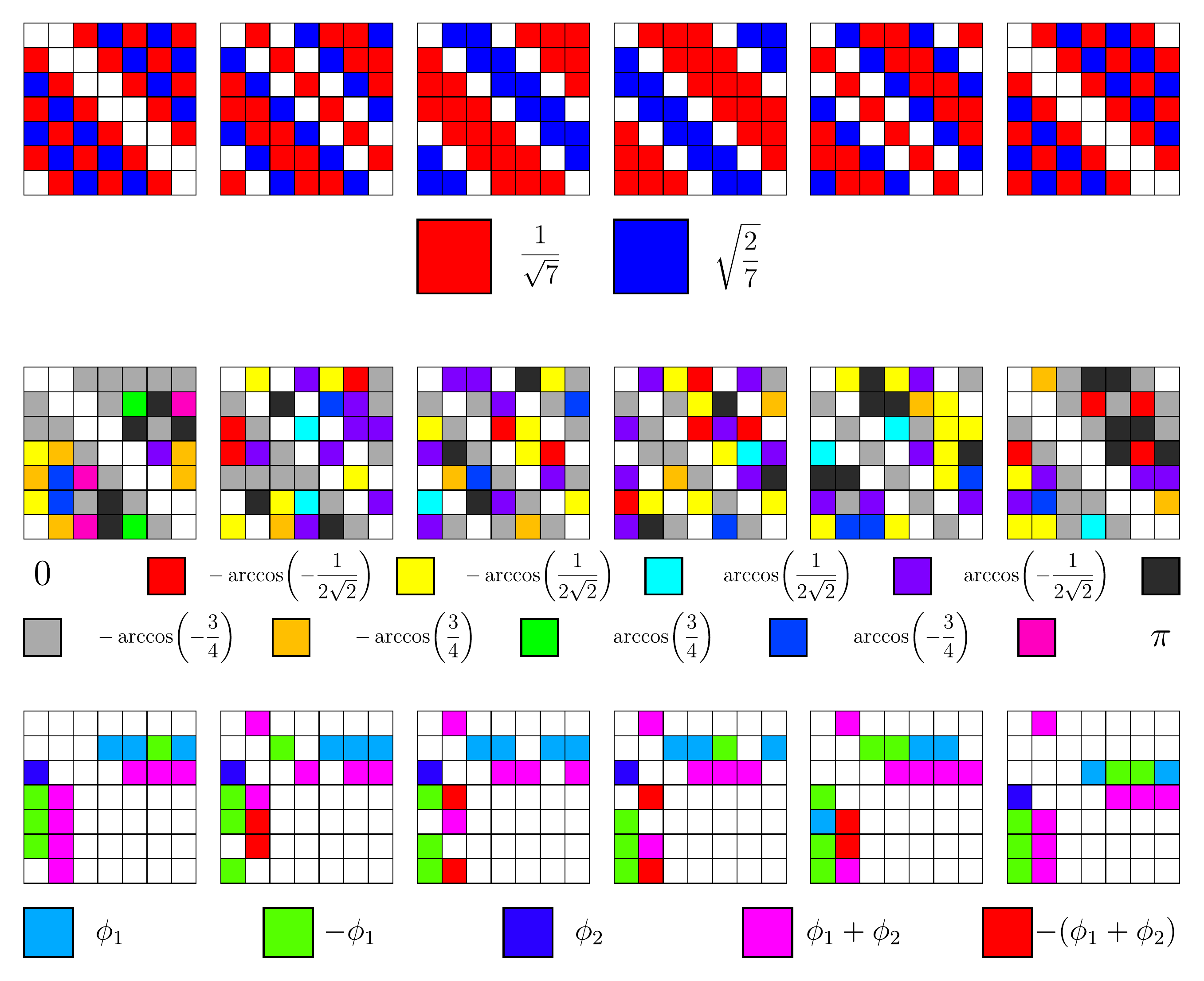}
    \caption{Visual representations of the bases $\{|a_{(k,l)}\ra\}$ generating the family $U_{49}$ of 2-unitary gates of dimension $d=7$ by equation \eqref{U_form1}. The first basis $k = 1$ is omitted since we set it to the computational basis by local transformation. Top row represents amplitudes $|a_{kli}|$ of the states ($k = 2,\cdots,7$). Middle row shows constant contributions to the phases of the form $\exp(i\phi_{kli})$. Last row represents the distribution of free contributions to the phases. Colours labeling the values are displayed in a vertical manner. White spaces in the first two rows represent zero elements and the elements without the free phase in the last row.}
    \label{fig:dimension_7}
\end{figure}

\subsection{Dimension 9}

In case of dimension $9$ let us consider unitary matrices $V_k$ of size $9 \times 9$ composed from consequent basis vectors $\ket{a_{k,l}}$,
\begin{equation}
    V_k = \qty(\ket{a_{k,1}},\hdots,\ket{a_{k,9}}).
\end{equation}
Let us define a cyclic permutation on blocks,
\begin{equation}
    \pi_{blocks} = (147)(258)(369)
\end{equation}
which we use in defining
\begin{equation}
    V_{k+3m} = P_{\pi_{blocks}}^m V_k~,
\end{equation}
where $P_{\pi}$ is the matrix representation of permutation $\pi$. As before, we may fix the first basis to be given by the computational basis,
\begin{equation}
    V_1 = \mathbb{I}~.
\end{equation}
The remaining two bases $V_2$ and $V_3$ are parameterized by two independent cyclic unitary matrices $B_2,\,B_3$,
\begin{equation}
\label{eq_9_blocks}    
    B_k = \mqty(
        a_k & b_k e^{i\phi_k} & c_k e^{i\theta_k} \\
        c_k e^{i\theta_k} & a_k & b_k e^{i\phi_k} \\
        b_k e^{i\phi_k} & c_i e^{i\theta_k} & a_k).
\end{equation}
with $a_k^2+b_k^2+c_k^2 = 1$ and $\phi_k$ and $\theta_k$ such that $B_kB_k^\dagger = \mathbb{I}$.

Then, we define for $k = 2, 3$
\begin{align}
\label{V9_form}
    V_k = P_{\pi_k}\qty(\bigoplus^3 B_{k}) P_{\sigma}~,
\end{align}
with permutations
\begin{align}
    \sigma = (24)(37)(68) 
\end{align}
and
\begin{align}
    \pi_2 = (123456789)\circ\sigma~, &&
    \pi_3 = (135792468)\circ\sigma~,
\end{align}
which, overall, yields the structure of entries as in Fig.~\ref{fig:dimension_9}, where each distinct number is marked by a different colour.
The 2-unitary matrix $U_{81}$ can be reconstructed using \eqref{U_form1} as:
\begin{equation}
[U_{81}]_{ki,jl} = A_{klj} [V_k]_{li}
\end{equation}
with $V_k$ as described in equation \eqref{V9_form} and $A_{klj} = \delta_{k,l\oplus j}$.
Note that the entire matrix $U_{81}$ has 4 free parameters, 
since unitarity conditions for $B_2$ (and $B_3$) reduce the number of their free parameters to $2$ (and $2$),
which in turn guarantees 2-unitarity conditions on $U_{81}$ constructed from the bases $V_k$ according to the recipe \eqref{U_form1}.

\begin{figure}[h]
    \centering
    \includegraphics[width=.6\linewidth]{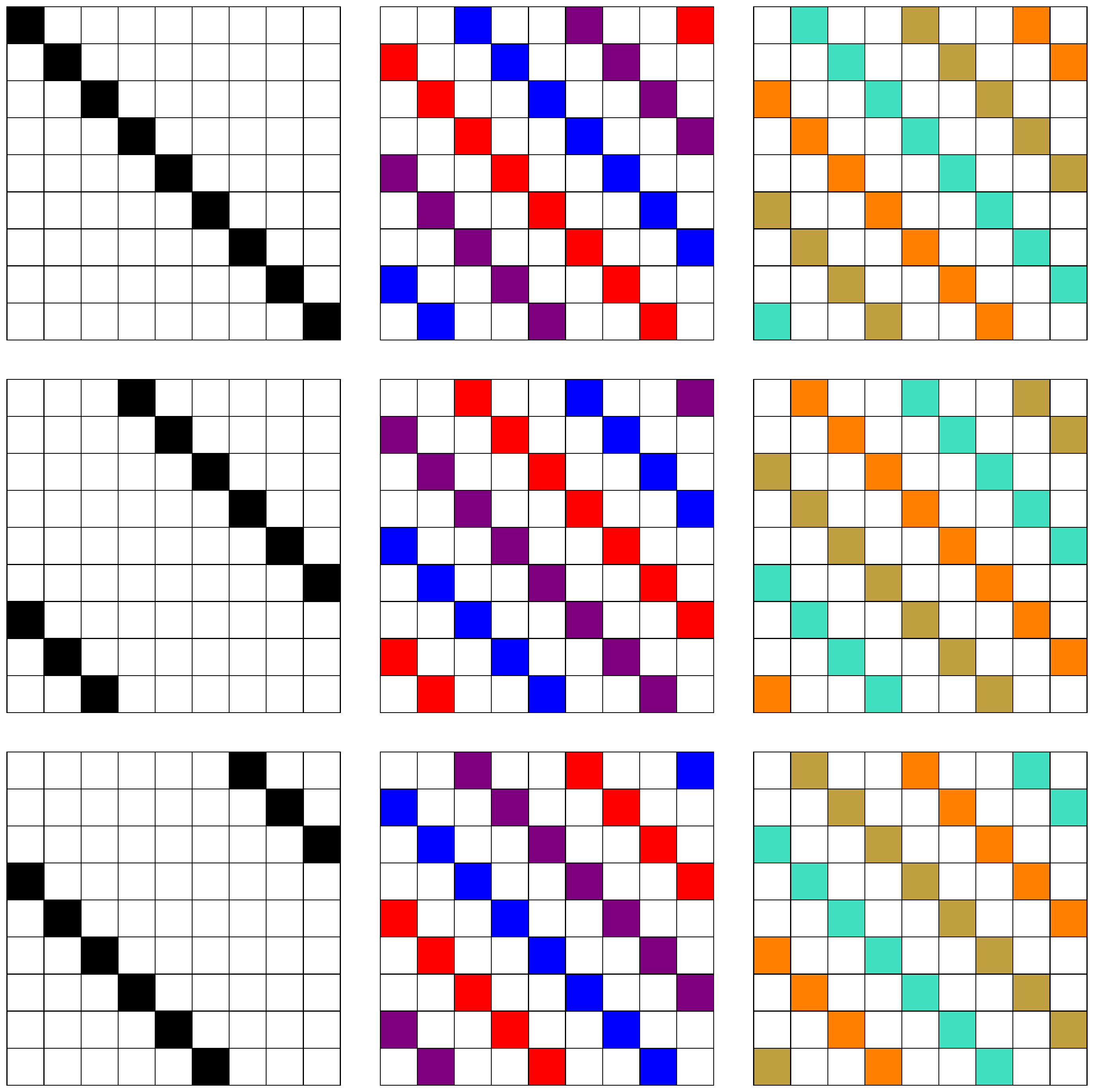}
    \caption{Visual representation of the basis $\{|a_{kl}\ra\}$ generating the family $U_{81}$ of 2-unitary gates with local dimension $d=9$, defined by equation \eqref{U_form1}. Triples of entries in the colours blue, violet, red (and cyan, brown, orange) correspond to entries of cyclic unistochastic matrices $B_2$ (and $B_3$) from \eqref{eq_9_blocks}, each colour representing one number.}
    \label{fig:dimension_9}
\end{figure}

\begin{figure*}[t]
    \centering
    \includegraphics[width=1\linewidth]{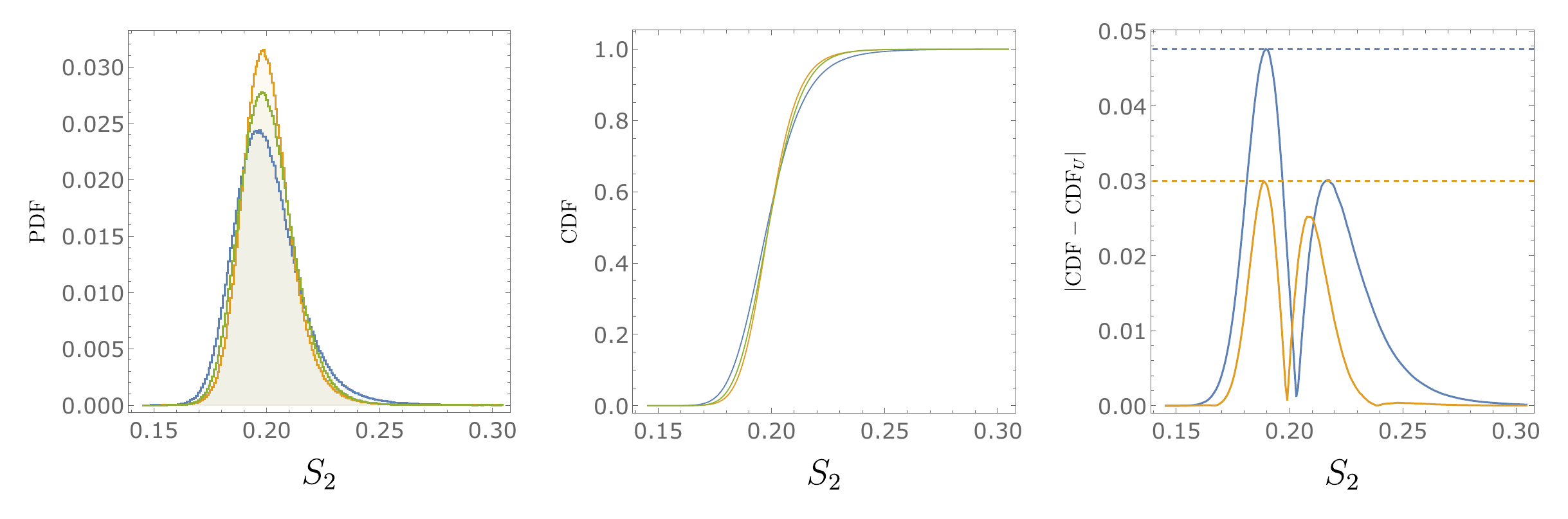}
    \caption{a) Density distributions of linear entropy for outputs of $U_{81}$ (green) and two nonequivalent AME permutations $P_{81}$ (orange, blue) action on separable states. b) Cumulative of those distributions, c) Absolute differences between cumulatives for $U_{81}$ and AME permutations $P_{81}$, with respective maximal values of 0.048 and 0.03, corresponding to $p$-values of distinguishability in Kolmogorov-Smirnov test with $2\cdot10^6$ samples equal to $10^{-1965}$ and $10^{-778}$, respectively.
    }
    \label{fig:dimension_9_hists}
\end{figure*}

Notice, that for limit values of parameters the matrix $U_{81}$ degenerates to permutation matrices. Thus it can be interpreted as a generalization and extension of 2-unitary permutations into continuous families of 2-unitary matrices.

Using the invariants $I_{\sigma, \tau, \rho, \lambda}(U_{81})$ we have not been able to demonstrate local inequivalence of $U_{81}$ from the 2-unitary permutations, and thus we resorted to a statistical approach used in~\cite{new_AME_16}. Histograms of generated entanglement (see Fig.~\ref{fig:dimension_9_hists}) show that the family $U_{81}$ is indeed locally distinct from permutations. We verify this quantitatively by using the two-sample Kolmogorov-Smirnov test with $2\cdot10^6$ samples per distribution, which yields the confidence level of at most $p = 10^{-778}$, implying that the sample obtained from $U_{81}$ is different from the standard construction for the 2-unitary permutations $P_{81}$ with probability $1-p$. 

We evaluate the coherence measures $S_2(U_{81})$ and $S_2\big(F_9^{\otimes 2}U_{81}\big)$ to find an inner bound on the coherence range of any member of the family as
{\footnotesize
\begin{equation}
    \begin{aligned}
        \text{range}\big(S_2(U_{81})\big) \supset \left[\frac{1}{243}\right.&\qty(1+a_1^4+b_1^4+c_1^4 +a_2^4+b_2^4+c_2^4), \\ 
\frac{1}{3}&\qty(1+a_1^4+b_1^4+c_1^4 +a_2^4+b_2^4+c_2^4)\biggr].
    \end{aligned}
\end{equation}}

The table of estimated coherence ranges obtained for other entropies and comparison with permutations is presented in Appendix \ref{app:coh_measure}.

The obtained solution provides a recipe to generate the absolutely maximally entangled state $\text{AME}(4,9)$  by the same token as in the dimension $d = 7$.

\section{Multipartite Convolutional Channels}\label{sec:multistoch}

Finally, we present the extension of our results for the multi-partite systems. To do so, we start by generalizing the notion of tristochasticity into multi-stochasticity. On the classical level, such an abstraction is quite natural.

\begin{defi}
    A tensor $A$ is called \textit{multistochastic} if $A_{i_1,\hdots,i_j,\hdots,i_n} \geq 0$ and $\sum_{i_j} A_{i_1,\hdots,i_j,\hdots,i_n} = 1$ for \mbox{$j\in(1,\hdots,n)$}.
\end{defi}

On the quantum level, the notion of multistochasticity is generalised
in the same spirit but involved formulas are slightly more convoluted.

\rd{\begin{defi}\label{multi_q_def}\cite{bistron2023tristochastic}
Channel $\Phi_D: \Omega_{d^{m-1}} \to \Omega_d$ defined by dynamical matrix $D_{j_1, \cdots, j_m}^{i_1, \cdots, i_m}$ by
{
\begin{equation}
\Phi_D[\rho_2 \otimes \cdots \otimes \rho_{m}] = \Tr_{2, \cdots,m}[D (\mathbb{I} \otimes \rho_2^\top \otimes \cdots \otimes \rho_{m}^\top)]~,
\end{equation}}
is called an $m$-stochastic channel, if for any sequence of density matrices $\{\rho_2, \cdots, \rho_{m}\}$ and any index \mbox{$k \in \{1, \cdots, m\}$} the map:
\begin{equation}\label{eq_multi_D2}
\Tr_{1,\cdots,k-1,k+1,\cdots,m-1 }\;[D(\underbrace{\rho_2^\top \otimes \cdots \otimes }_{k-1\text{ elements}}\mathbb{I} \otimes  \cdots \otimes \rho_{m}^\top) ]~,
\end{equation}
also forms a legitimate quantum channel.
\end{defi}}

While constructing multipartite convolutional channels, we also refer to the generalizations of Latin squares -- Latin cubes and Latin hypercubes~\cite{LatinHypercubes}.

\begin{defi}\label{def:multo_perm_coch}
A Latin hypercube $L_{i_2, \cdots, i_{m}}$ of dimension $d$ is a tensor with entries from the set $[d] = \{1,\cdots,d\}$, such that every hypercolumn $\qty{L_{i_2,\hdots,i_k,\hdots,i_m}}_{i_k=1}^d$ contains all the elements from the set $[d]$.
\end{defi}

We observe
that by fixing all the indices in a Latin hypercube $L_{i_2, \cdots, i_{m}}$ except two, one obtains a Latin square. We call any such square a \textit{Latin subsquare}. Using this observation one defines orthogonal Latin hypercubes as in~\cite{LatinHypercubes}. 

\begin{defi}
Two Latin hypercubes are orthogonal if each corresponding pair of Latin subsquares are orthogonal.   
\end{defi}

The order of indices \rd{$i_2,\hdots,i_m$} introduced above provides a natural translation between Latin hypercubes (and squares) and multistochastic permutation tensors and \textit{vice versa}.
Notice that 
\begin{equation}
\label{Latin-perm}
    A_{i_1, i_2, \cdots i_m} = \delta_{i_1,L_{i_2, \cdots i_m}}
\end{equation}
satisfies all the necessary conditions of the permutation tensor, as the sum over any of the indices $i_1, \cdots i_m$ on the right-hand side of \eqref{Latin-perm} gives one. On the other hand, one may define Latin hypercube by
\begin{equation}
    L_{i_2,\cdots, i_m} = i_1 \text{~such that~} A_{i_1,\cdots,i_m} = 1~.
\end{equation}
Once again, the defining property of Latin hypercubes is satisfied, because for any fixed values of $i_2, \cdots\rd{i_{k-1},i_{k+1},\cdots}, i_m$ and two different $i_k\neq i_k'$ if
\begin{equation*}
L_{i_2,\cdots,i_k,\cdots i_m} = L_{i_2,\cdots, i_k',\cdots, i_m} = i_1
\end{equation*}
then 
\begin{equation*}
 A_{i_1,\cdots,i_k,\cdots,i_m} =  A_{i_1,\cdots,i_k',\cdots,i_m} = 1.   
\end{equation*}
\rd{The above implies that the sum over the hypercolumn defined by the set of fixed indices given above yields 2, in contradiction with multistochasticity of $A$.}

\subsection{Coherification of multi--stochastic permutation tensors}\label{sec:multi_perm_coch}

Now we are equipped with all the necessary tools to study multipartite convolutional channels $\Phi_A$ associated with $m$-stochastic permutation tensors $A$. As it turns out, those channels can also be realized as a unitary

\begin{equation}
\label{multistoch_U}
U_{i_1 j_1 \;j_2 \;\cdots \; j_{m-2}}^{\;i_2 \;i_3 \;i_4 \;\cdots\; i_m\;} = A_{i_1 i_2 \cdots i_m} \ket{a_{(i_1 i_2 \cdots i_{m-1})}}_{j_1  j_2 \cdots j_{m-2}}~,
\end{equation}
where for any $i_1$ the vectors $\qty{|a_{(i_1;\;i_2 \cdots i_{m-1})}\ra}_{i_2,\cdots,i_{m-1} = 1}^d$ form a $d^{m-2}$ dimensional basis; 
followed by partial trace on all subsystems except the first one. 
Consult Appendix~\ref{app:milti_coch} for detailed derivation and examples.

Dynamical matrix $D$ of the channel $\Phi_A$ takes a form analogous to \eqref{D_tristoch1}
\begin{equation*}
D_{i_1'\;i_2' \; \;\cdots\;}^{\;i_1\;i_2 \;\cdots\;} =  A_{i_1 i_2 \cdots}   A_{i_1' i_2' \cdots} \la a_{(i_1 i_2 \cdots)}|a_{(i_1' i_2' \cdots )} \ra ~.
\end{equation*}
If one defines
\begin{equation}
|\vb{a}_{(i_1 i_2 \cdots i_m)} \ra := A_{i_1 i_2 \cdots i_m} |a_{(i_1 i_2 \cdots i_{m-1})} \ra~,
\end{equation}
the condition for unitarity of $U$ \eqref{multistoch_U} can be expressed in the spirit of formulae \eqref{multi_tri1} as

\begin{equation}
\label{multi_multi1}
\forall_{i_1} \forall_{i_2 i_2'\cdots i_m i_m'}~~\la \vb{a}_{(i_1 i_2 \cdots i_m)}|\vb{a}_{(i_1 i_2' \cdots i_m')}\ra = \delta_{i_2 i_2'}\cdots \delta_{i_m i_m'}~.
\end{equation}
and the complementary conditions read,
\begin{equation}
\label{multi_multi2}
\begin{aligned}
& \forall_{i_2} \forall_{i_1 i_1'\cdots i_m i_m'}~~\la \vb{a}_{(i_1 i_2 \cdots i_m)}|\vb{a}_{(i_1' i_2 \cdots i_m')}\ra = \delta_{i_1 i_1'}\cdots \delta_{i_m i_m'}, \\
& \forall_{i_3 = i'_3} \forall_{i_1 i_1'\cdots i_m i_m'}~~\la \vb{a}_{(i_1 i_2 \cdots i_m)}|\vb{a}_{(i_1' i_2' \cdots i_m')}\ra = \delta_{i_1 i_1'}\cdots \delta_{i_m i_m'}, \\
& \cdots
\end{aligned}
\end{equation}

These conditions let us define unitary channels in all the other choices of input and output spaces, analogous as in \eqref{new_tri_u}, hence they correspond to (quantum) $m$-stochasticity of $\Phi_A$.

One may expect that conditions \eqref{multi_multi1}, \eqref{multi_multi2} would be sufficient to guarantee also maximal entangling power of the multipartite unitary channel $U$, similarly as in the case of convolutional channels. However, this is not the case. While considering multipartite entangling power \cite{linowski2019entangling}, see eq. \eqref{ep_multi_def}, one must consider all the bipartitions for both input and output indices of $U$: $p|q$ and $x|y$, see \eqref{Epq_def} in Appendix \ref{sec:multi_eq}. On the other hand, in the equations  \eqref{multi_multi1}, \eqref{multi_multi2} all except one output indices of $U$: $j_1 \cdots j_{m-2}$ are always together. Thus quantum multi-stochasticity is a weaker demand. 

For an extended discussion of entangling power in the context of multipartite channels we encourage the reader to consult Appendix \ref{sec:multi_eq}. \rd{Finally in Appendix \ref{app:max_coch} we generalize the above discussion for arbitrary tristochastic tensors,  which have no explicit formulae for optimal coherification.}

\subsection{Latin (hyper)cubes and their connection to maximal $e_p$}

Finally, we present an example of a multipartite convolutional channel associated with $m$-stochastic permutation tensor, which is \textit{both} (quantum) multi-stochastic and has a maximal multipartite entangling power, demonstrating that our framework generalizes previously known examples.

Let $A_{i_1 \cdots i_m}$ be a permutation tensor of interest, $L_{i_2\cdots i_m}^{(1)}$ corresponding Latin hypercube and $L_{i_2\cdots i_m}^{(2)}, \cdots, L_{i_2\cdots i_m}^{(m-1)}$ be $m-2$ Latin hypercubes such that all Latin hypercubes $\{L^{(i)}\}$ are mutually orthogonal. Then the multipartite unitary $U$ corresponding to channel  $\Phi_A$ has a form

\begin{equation}
\label{multi_u_Latin}
U = \sum_{i_2,\cdots i_m} |L_{i_2\cdots i_m}^{(1)} L_{i_2\cdots i_m}^{(2)} \cdots L_{i_2\cdots i_m}^{(m-1)} \ra \la i_2 \cdots i_{m}|~.
\end{equation}

Because Latin hypercubes ${L^{(i)}}$ are mutually orthogonal, by Theorem 5.12 from~\cite{LatinHypercubes}, construction \eqref{multi_u_Latin} gives a large permutation matrix, hence a unitary matrix.

To argue the maximum entangling power of \eqref{multi_u_Latin} we use the fact that vectorised unitary matrix $\ket{U}$ is an AME state (see~\cite{ameMasterThesis} section 3.2) so all the partitions in \eqref{Epq_def} gives maximal possible contribution to entangling power.
Since $\ket{U}$ defined in \eqref{multi_u_Latin} is an AME state, a simple argument for the multi-stochasticity of $\Phi_A$ follows. The maximal entanglement of $|U\ra$ with respect to bipartition $i_2,\cdots,L_{i_2\cdots i_m}^{(1)},\cdots, i_m|i_k,L_{i_2,\cdots, i_m}^{(2)}\cdots L_{i_2,\cdots, i_m}^{(m-1)}$, guarantee that the matrix:
\begin{equation}
\sum_{i_2,\cdots i_m} |i_k L_{i_2\cdots i_m}^{(2)}  \cdots L_{i_2\cdots i_m}^{(m-1)} \ra \la i_2 \cdots L_{i_2\cdots i_m}^{(1)}  \cdots i_{m}|
\end{equation}
is unitary for any $k$.
In Appendix \ref{app:milti_coch_multi}, Theorem \ref{quantum_multi_thm} we present also an alternative the proof of multi stochasticity for $\Phi_A$ corresponding to unitary channel \eqref{multi_u_Latin}.

Although the existence of orthogonal Latin hypercubes is far less explored than for orthogonal Latin squares, some results are known. For example, thanks to Theorem~5.4 form~\cite{LatinHypercubes}, we are guaranteed that for $d$ being prime power and $2< m \leq d+1$ there exist at least $d-m+2$ mutually orthogonal Latin hypercubes of order $m-1$. This means that in the prime power dimension $d$ our construction is valid if $(d+1)/2 \geq m-1$.

\section{Outlook and conclusions}\label{sec:outlook}

Our work serves as a first step on a new trail for constructing highly entangling operations. 
In particular, we arrive at novel families of 2-unitary matrices with free non-local parameters beyond simple phasing. 

First, we considered the entire set of convolutional channels, to show the full range of possibilities for such construction. 
Using the framework of coherification of permutation tensors, we introduced new continuous families of 2-unitary matrices in dimensions $7\times 7$ and $9\times 9$, and emphasize their particular properties. 
\rd{Furthermore, we proposed a new measure of coherence for unitary operations, based on the range of Rényi entropies generated from computational basis inputs. This measure captures the ability of a unitary to generate nontrivial coherence under arbitrary local pre- and post-processing. Our approach builds on well-established coherence monotones for quantum states and extends earlier entropy-based methods from single-value metrics to the entire spectrum of Rényi entropies over local transformations.}

Thus we placed the first steps towards development of the theory of 2-unitary channels, which will allow for their parametric optimization for specific tasks. It is crucial to stress at this point that the introduced families are exemplary and the introduced framework
is not limited to them. 
We emphasise that 2-unitary matrices based on the construction introduced in this work are not equivalent to either the standard orthogonal Latin squares construction or other non-standard approaches. We demonstrated the former explicitly, using invariants and statistical methods~\cite{36_officers_of_Karol, rather2023biunimodular}. The latter can be found by noticing either a mismatch between the block structure~\cite{36_officers_of_Karol} of the solutions or the lack of continuous non-local parameterization~\cite{Suahil_invariants}. \rd{Last but not least, in Appendix \ref{app:max_coch}, we show that each 2-unitary matrix corresponds to a tristochastic tensor, for which it can be considered a maximal coherification -- a hint at a deeper link akin to unitary and bistochastic matrices.} 

Convolutional channels were based primarily on tristochastic tensors\rd{, thus} resulting in bipartite unitary matrices. However, in the final Section \ref{sec:multistoch}, we also generalize \rd{our approach} for multistochastic permutation tensors giving multipartite unitaries with large entangling and disentangling capacities.

Possible application of our work, beyond the new frontier of the search for perfect tensors, might be its implementation into the recently emerging field of quantum convolutional neural networks (qCNN)~\cite{QCNN1, QCNN2, QCNN3, Herrmann}. To fully translate the idea of convolutional neural network on the quantum framework, one has to replace classical states and operations with their quantum counterparts in a suitable way.
Notably, the convolution layers of quantum networks necessitate a quantum equivalent of the convolution and pooling operation. Such an operation should possess several desired properties:  (a) the ability to disentangle entangled states, converting non-local correlations into properties of local states; (b) nontrivial impact on computational states, leveraging quantum properties by introduction of coherence; (c) parametrizability, necessary to facilitate the training of convolutional layers. Given that entangling power is proportional to a less-known disentangling power~\cite{CLARISSE2007400}, the proposed framework of 2-unitary operations emerges as a strong candidate satisfying the above properties.

In Appendix \ref{sec:2x2} we present a simple case study, on the example of convolutional channels constructed from orthogonal Latin squares, to show its limited, nevertheless quite remarkable, capabilities in disentangling not only quantum states but the entire maximally entangled basis.

Our work prompts important and intriguing questions worth further investigation. First and foremost, it is tempting to try to generalize our findings into a universal recipe for continuous families of multi-unitary matrices in arbitrary dimension $d$. The dimensions $d = 2^n$ are of special interest due to possible applicability in quantum circuits. The next open problem is to construct a quantum circuit that corresponds to such channels, which is crucial for real-life applications. Finally, the issue of connecting the convolutional channels into larger networks has only been touched upon and requires further study for more general channels.

\medskip
\noindent
\rd{
\textit{Note added:}
During the publication process another approach to quantum convolution, particular useful in description and characterization of magic and stabilizer states, was put forward in \cite{Bu2023} and later extended in \cite{Bu2025} in the context of the so called quantum Fourier analysis. We note that this approach falls within our framework with Definition 8 from \cite{Bu2025} corresponding to \eqref{multistoch_U}, or Definition \ref{def:convololo_channel} for two-argument case, while the coherification property is explicitly proven in Proposition 12 from \cite{Bu2025}.}

\subsection*{Acknowledgments}

It is a pleasure to thank Wojciech Bruzda and Adam Burchardt for fruitful discussions. Moreover, we thank Grzegorz Rajchel-Mieldzioć, Arul Lakshminarayan, Suhail Rather and Michael Zwolak for valuable suggestions.
Financial support by NCN under the Quantera project no. 2021/03/Y/ST2/00193 and PRELUDIUM BIS no. DEC-2019/35/O/ST2/01049 is gratefully acknowledged.

\appendix

\section{Case study: 2-unitary from orthogonal Latin squares}\label{sec:2x2}

In this Appendix we aim to present previously neglected properties of $2$-unitary, and later $3$-unitary, matrices -- its disentangling power. More precisely we will demonstrate in the simple case study, that multi-unitary matrices, which fall in our framework of convolutional channels, can disentangle the entire basis of maximally entangled states into a separable one.
 
Let $A$ be a tristochastic permutation tensor of interest, $L$ a corresponding Latin square and $M$ Latin square orthogonal to $L$, then the maximally disentangling unitary matrix in the channel $\Phi_A$ \eqref{U_form1} can be constructed as
\begin{equation}
    P_{d^2} = \sum_{lj} |L_{lj}, M_{lj}\ra\la l,j|~, \tag{\ref{U_orto_Latin_s} revisited}
\end{equation}
with the same relation between Latin squares $L,M$ and the permutation tensor $A$ with vectors $|a_{kl}\ra$ as in  \eqref{ort_Latin_basis}.

Unitary matrix \eqref{U_orto_Latin_s} has maximal entangling power $e_p = 1$ and gate typicality $g_t = \frac{1}{2}$. 
Since orthogonal Latin squares exist in any dimension except $d = 2$ and $d=6$~\cite{bose_shrikhande_parker_1960},  construction \eqref{U_orto_Latin_s} is rather general. On the downside, it does not provide any free parameters except possible phases.

As the case study, which can be efficiently implemented in the modern quantum computer, we consider a channel $\Phi_A$ constructed via \eqref{U_orto_Latin_s} for two ququarts. Since each ququart can be interpreted as a pair of qubits, from the perspective of quantum hardware the channel $\Phi_A$ is a unitary $P_{16}$ acting on four qubits, followed by a partial trace on two of those.

In general, there is a large freedom in the construction of orthogonal Latin squares $L$ and $M$, which correspond to local gates. One may simultaneously permute rows and columns of these squares, which corresponds to local preprocessing $v_1\otimes v_2$ of the unitary channel $P_{16}$ \eqref{U_orto_Latin_s} or permute the symbols in the Latin squares, which corresponds to local postprocessing $v_1'\otimes v_2'$ of $P_{16}$, resulting in a locally equivalent channel of the form $(v_1\otimes v_2) P_{16} (v_1'\otimes v_2')$. To reduce the number of such "repetitive" channels, we fixed the first columns of $L$ and $M$ and the first row of $M$ to be $(1,2,3,4)$, which almost completely erases such degeneration. After all the eliminations associated with local preprocessing and postprocessing, the remaining pair  $(L, M)$ of orthogonal Latin squares in dimension $d = 4$ is given below
\begin{equation*}
L = \left(
\begin{array}{cccc}
   1&  2&  3&  4\\
   2&  1&  4&  3\\
   3&  4&  1&  2\\
   4&  3&  2&  1\\
\end{array}
\right) ~,~~
M = \left(
\begin{array}{cccc}
   1&  2&  3&  4 \\
   3&  4&  1&  2 \\
   4&  3&  2&  1 \\
   2&  1&  4&  3 \\
\end{array}
\right) ~.
\end{equation*}
This pair $(L, M)$ gives a unitary channel $P_{16}$  which can be viewed either as two ququart maximally (dis)entangling gate or, after decomposing the ququarts, a four qubit highly (dis)entangling gate. It can be implemented using a circuit of depth 11 using 18 nearest-neighbour $CNOT$ gates in the following way: 
\begin{widetext}
\vspace{-0.05 cm}
    \begin{equation}
    {\Qcircuit @C=1.0em @R=.7em {
        &\multigate{3}{P_{16}} & \qw & \raisebox{-4.8em}{=} &&\ctrl{1}&  \qw   & \targ   &  \qw    & \targ   &  \qw    &\ctrl{1}&
         \targ   &  \qw   &\ctrl{1}& \targ & \qw     \\
         &\ghost{P_{16}} & && \qw & \targ  &\ctrl{1}&\ctrl{-1}& \targ   &\ctrl{-1}& \targ   & \targ  &
        \ctrl{-1}&\ctrl{1}& \targ   &\ctrl{-1} & \qw \\
        &\ghost{P_{16}} & \qw &&&\ctrl{1}& \targ  & \targ   &\ctrl{-1}& \targ   &\ctrl{-1}&
         \targ   &\ctrl{1}& \targ  &\ctrl{1}& \targ& \qw     \\
        &\ghost{P_{16}} & \qw &&& \targ  &  \qw   &\ctrl{-1}&  \qw    &\ctrl{-1}&  \qw    &
        \ctrl{-1}& \targ  & \qw    & \targ  &\ctrl{-1}& \qw
        \gategroup{1}{1}{2}{1}{.7em}{--}
        \gategroup{3}{1}{4}{1}{.7em}{--}
        \gategroup{1}{3}{2}{3}{.7em}{--}
        \gategroup{3}{3}{4}{3}{.7em}{--}
        \gategroup{1}{5}{2}{5}{.7em}{--}
        \gategroup{3}{5}{4}{5}{.7em}{--}
        \gategroup{1}{17}{2}{17}{.7em}{--}
        \gategroup{3}{17}{4}{17}{.7em}{--}
    }}
\end{equation}
\vspace{-0.05 cm}
\end{widetext}
Notice that one left outer layer and two right outer layers can be  "pulled out" as local ququart pre- and postprocessing reducing the circuit to $12$ gates organized in 8 layers.
The matrix $P_{16}$ acts on two ququarts each represented by a pair of qubits encompassed by a dashed rectangle.
Up to our knowledge, this is the most efficient way to implement a 2-unitary matrix $P_{16}$ using only nearest neighbour gates in linear architecture. 

For the sake of completeness, we recall
that $P_{16}$ is a permutation matrix of order $16$, thus all the vectors from the computational basis are mapped onto each other, so the results are separable.

On the other hand, it is intriguing that there exists a basis of maximally entangled states of two ququarts, for which all vectors are mapped by $P_{16}$ onto separable states. Therefore the action of $\Phi_A$ on all the vectors from this basis gives a set of pure states, which overlap with the ququart basis.

To present this basis and discuss more of its profitable properties let us first introduce a suitable notation. Let
\begin{equation*}
|\Psi_{\pm}\ra = \frac{|00\ra \pm |11\ra}{\sqrt{2}} ~,~~~ |\Xi_{\pm}\ra = \frac{|01\ra \pm |10\ra}{\sqrt{2}}~,
\end{equation*}
denote the Bell states, entangling the first or second qubits from each ququart. Then the discussed basis takes the form
{\footnotesize
\begin{equation}
\label{monster_basis}
\left\{
\begin{matrix}
|\Psi_{+}\ra \otimes |\Psi_{+}\ra, & \wm|\Psi_{+}\ra \otimes |\Psi_{-}\ra, & \wm|\Psi_{-}\ra \otimes |\Psi_{+}\ra, & \wm|\Psi_{-}\ra \otimes |\Psi_{-}\ra,\\

|\Psi_{+}\ra \otimes |\Xi_{+}\ra, & \wm|\Psi_{+}\ra \otimes |\Xi_{-}\ra, & -|\Psi_{-}\ra \otimes |\Xi_{+}\ra, & -|\Psi_{-}\ra \otimes |\Xi_{-}\ra,\\

|\Xi_{+}\ra \otimes |\Psi_{+}\ra, & -|\Xi_{+}\ra \otimes |\Psi_{-}\ra, & -|\Xi_{-}\ra \otimes |\Psi_{+}\ra, & \wm|\Xi_{-}\ra \otimes |\Psi_{-}\ra,\\

|\Xi_{+}\ra \otimes |\Xi_{+}\ra, & -|\Xi_{+}\ra \otimes |\Xi_{-}\ra, & \wm|\Xi_{-}\ra \otimes |\Xi_{+}\ra, & -|\Xi_{-}\ra \otimes |\Xi_{-}\ra
\end{matrix}
\right\}.
\end{equation}}

The vectors from consecutive rows of basis \eqref{monster_basis} map under $P_{16}$ onto basis vectors
\begin{equation*}
\{|00\ra~,~~ |01\ra~,~~ |10\ra ~,~~ |11\ra  \}  
\end{equation*}
on the first ququart and the vectors from consecutive columns map onto basis vectors
\begin{equation*}
\left\{
\begin{matrix}
(|00\ra + |01\ra + |10\ra + |11\ra)/2, \\
(|00\ra + |01\ra - |10\ra - |11\ra)/2, \\
(|00\ra - |01\ra - |10\ra + |11\ra)/2, \\
(|00\ra - |01\ra + |10\ra - |11\ra)/2~ \\
\end{matrix} \right\}
\end{equation*}
on the second ququart.
Thus the choice of type of Bell states ($|\Psi_{\pm}\ra$ or $|\Xi_{\pm}\ra$) determines the result on the first ququart, and the signs chosen in them (eg. $|\Psi_{+}\ra$ or $|\Psi_{-}\ra$)  determine the result on the second ququart.

Due to such an elegant mapping of basis vectors \eqref{monster_basis} under $P_{16}$ we can say even more about the action of $P_{16}$ on maximally entangled states of two ququarts.  For example, if one constructs such a state as a superposition of vectors from one row (or column) from basis \eqref{monster_basis}, then after the action of $P_{16}$  all these basis vectors will map on the same pure state on the first (second) ququart. Therefore the action of $P_{16}$ on such superposition also gives a separable state, hence action of $\Phi_A$ gives pure output.

Generalizing this property on the pairs or triples of columns and row form \eqref{monster_basis} one obtains the following result.
\begin{theo}\label{theo_square4_map}
Let $|\psi\ra$ be any state of two ququarts, whose decomposition in the basis \eqref{monster_basis} employ the vectors from $m$ rows and $n$ columns of \eqref{monster_basis}. Then the maximal number of nonzero eigenvalues of $\Phi_A(|\psi\ra\la\psi|)$ is equal to $min(m,n)$.
\end{theo}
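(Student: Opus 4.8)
\section*{Proof proposal}

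The plan is to reduce the claim to an elementary statement about the rank of a matrix with prescribed support. The crucial input is the product-basis mapping established in the paragraphs preceding the theorem: writing $|b_{rs}\ra$ for the entry in row $r$ and column $s$ of the array \eqref{monster_basis}, one has
\begin{equation*}
P_{16}\,|b_{rs}\ra = |\phi_r\ra_1 \otimes |\chi_s\ra_2,
\end{equation*}
where $\{|\phi_r\ra\}_{r=1}^4 = \{|00\ra,|01\ra,|10\ra,|11\ra\}$ is the computational basis of the first ququart and $\{|\chi_s\ra\}_{s=1}^4$ is the Fourier-type basis of the second ququart listed in the text. Both families being orthonormal, $P_{16}$ carries the maximally entangled basis \eqref{monster_basis} bijectively onto a complete orthonormal \emph{product} basis of the two-ququart space, with the row index controlling the first factor and the column index controlling the second.

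First I would expand an admissible state as $|\psi\ra = \sum_{r,s} C_{rs}\,|b_{rs}\ra$, so that the coefficient array $C=(C_{rs})$ has nonzero entries spread over exactly $m$ rows and $n$ columns. Applying the mapping above yields
\begin{equation*}
P_{16}\,|\psi\ra = \sum_{r,s} C_{rs}\,|\phi_r\ra_1 \otimes |\chi_s\ra_2,
\end{equation*}
which is precisely a (generically non-normalized) Schmidt-type expansion of $P_{16}|\psi\ra$ across the bipartition into the two ququarts, with bipartite coefficient matrix $C$. Invoking the standard correspondence between the reduced state and the Schmidt decomposition, and using the orthonormality of both $\{|\phi_r\ra\}$ and $\{|\chi_s\ra\}$, the nonzero eigenvalues of $\Phi_A(|\psi\ra\la\psi|)=\Tr_2\big[P_{16}|\psi\ra\la\psi|P_{16}^\dagger\big]$ coincide with the squared singular values of $C$; hence their number equals $\operatorname{rank}(C)$.

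It then remains to bound the rank. Since $C$ is supported on only $m$ rows we have $\operatorname{rank}(C)\le m$, and since it is supported on only $n$ columns we have $\operatorname{rank}(C)\le n$, so $\operatorname{rank}(C)\le\min(m,n)$. To see that this value is attained I would fix any choice of $m$ rows and $n$ columns and place a generic matrix on the corresponding $m\times n$ block: such a matrix has rank $\min(m,n)$ and, generically, no identically vanishing row or column, so the resulting state indeed employs exactly $m$ rows and $n$ columns while saturating the bound. The computation is entirely routine once the product mapping is available, so the only step carrying genuine content — and the one I would verify most carefully — is the factorized action of $P_{16}$ on \eqref{monster_basis}, namely that the first-ququart output depends solely on the row and the second-ququart output solely on the column. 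This factorization is what converts the coefficient array into a true bipartite coefficient matrix and makes the rank bound sharp; I would confirm it by evaluating the permutation $P_{16}=\sum_{lj}|L_{lj},M_{lj}\ra\la l,j|$ of \eqref{U_orto_Latin_s} directly on the sixteen Bell-type tensor products comprising \eqref{monster_basis}.
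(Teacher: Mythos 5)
Your proposal is correct and takes essentially the same route as the paper: the paper's own justification for this theorem is exactly the product-basis mapping $P_{16}\,|b_{rs}\rangle = |\phi_r\rangle\otimes|\chi_s\rangle$ established in the paragraphs preceding the statement (row fixing the first-ququart output, column fixing the second), which the paper then declares to generalize from single rows/columns to $m$ rows and $n$ columns without writing out the rank argument. Your Schmidt-rank formalization --- identifying the nonzero eigenvalues of $\Phi_A(|\psi\rangle\langle\psi|)$ with the squared singular values of the coefficient matrix $C$, bounding $\operatorname{rank}(C)\le\min(m,n)$ by its support, and saturating the bound with a generic $m\times n$ block --- is precisely the implicit content of the paper's sketch, made explicit.
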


The channel $\Phi_A$ may be considered as a prototype for a building block in quantum convectional neural networks (qCNN). One can create qCNN acting on several ququarts, by stacking the discussed channel $\Phi_A$ parallely or sequentially, with suitable single-ququart gates along the way. To find a basis of entangled states transformed by such circuits into pure computational states, one only needs to iteratively combine the basis \eqref{monster_basis} with itself in an appropriate way.

\subsection{3-unitary from orthogonal Latin cubes}\label{app:3x2}

The construction presented above may be generalized into
multi-stochastic quantum channels. As an example, we briefly discuss the channel obtained from   \eqref{multi_u_Latin} using three orthogonal Latin cubes of dimension $d = 4$ presented in Fig.~\ref{hyper_orto_fig}.

\begin{figure}[ht]
        \includegraphics[width=\linewidth]{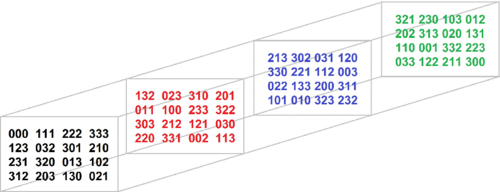}
    \caption{\label{hyper_orto_fig}
Three orthogonal Latin cubes of dimension $d = 4$ corresponding to 3-unitary matrix of size $4^3 = 64$ via equation \eqref{multi_u_Latin} with $m = 3$. Figure borrowed from~\cite{PhysRevA.92.032316} }
\end{figure}

Unitary $U$ \eqref{multi_u_Latin} from this channel acts on three quqarts, which we interpret as three pairs of qubits, same as previously.  
In this case, there also exists a maximally entangled basis, all of which elements are mapped into fully separable states of three ququarts. Therefore the action of $\Phi_A$ on those basis vectors gives pure states overlapping with the basis on quqart.

To present this basis let us entangle the first qubits from all ququarts and the second qubits from all quqarts by $GHZ$ states $|GHZ_{\pm}^i\ra$: 
\begin{equation}
\begin{aligned}
&|GHZ_{\pm}^1\ra = \frac{|000\ra \pm |111\ra}{\sqrt{2}} ~,~~ |GHZ_{\pm}^2\ra = \frac{|001\ra \pm |110\ra}{\sqrt{2}}~, \\
&|GHZ_{\pm}^3\ra = \frac{|010\ra \pm |101\ra}{\sqrt{2}} ~,~~ |GHZ_{\pm}^4\ra = \frac{|011\ra \pm |100\ra}{\sqrt{2}}~, \\
\end{aligned}
\end{equation}
Then the abovementioned basis has a form $\{|GHZ_{\pm}^i\ra \otimes |GHZ_{\pm}^j\ra \}$, where the indices $i,j$ and both sights $\pm$ are independent.
Moreover, after appropriate multiplication by $\pm 1$ of the basis vectors $\{|GHZ_{\pm}^i\ra \otimes |GHZ_{\pm}^j\ra \}$, one could repeat the above discussion, together with the analogue of Theorem \ref{theo_square4_map}, but this time on the three ququarts.

\section{Measure of coherence of a unitary operation}\label{app:coh_measure}

It is easy to measure the coherence of any pure state with respect to any given basis by considering the entropic properties of the resulting probability distribution.
Let us take a state $|\psi\ra$. Its coherence with respect to the
basis defined by a unitary matrix $U$ is given by \eqref{eq_whatever1}. 
For $\alpha \in \qty{0, 2, \infty}$ exponentials of these entropies, presented in equation \eqref{eq_whatever2},
turn out to have simple interpretations. In particular, $S_0$ counts the nonzero elements of $\ket{\psi}$ and $S_\infty$ is equal only to the absolute value of the largest element of $\ket{\psi}$. Finally, $S_2$ is closely connected to the linear entropy, often used in the context of entanglement. 

Based on the above, we may define measures for coherence of a unitary matrix $U$ based on average (or total) coherence generated  on the computational basis,

\begin{align}
    H_\alpha(U) = \frac{1}{D}\sum_{j=1}^D H_\alpha(\ket{j};U)~, \\
    S_\alpha(U) = \frac{1}{D}\sum_{j=1}^D S_\alpha(\ket{j};U)~,
\end{align}
where, again, the simple interpretation of $S_0
$ and $S_\infty$, is the average number of elements per vector and average maximal element. In this case, we have two apparent degrees of freedom to introduce -- freedom to change the measurement basis $\ket{j}$ to $W\ket{j}$, and to rotate the operation $U$ to $VUV^\dagger$. This yields the following expressions

\begin{align}
    H_\alpha(U;W,V) = \frac{1}{D}\sum_{j=1}^D H_\alpha(W\ket{j};VUV^\dagger), \\
    S_\alpha(U;W,V) = \frac{1}{D}\sum_{j=1}^D S_\alpha(W\ket{j};VUV^\dagger).
\end{align}
One can easily see that we may write explicitly
\begin{align}
\hspace{-0.15 cm}
    H_\alpha(U;W,V) &=\! \frac{1}{D(1-\alpha)}\!\sum_{j=1}^D \log(\sum_{i=1}^D \abs{\!\mel{i}{VUV'}{j}}^{2\alpha}\!), \\
    S_\alpha(U;W,V) &=\! \frac{1}{D}\sum_{i,j=1}^D \abs{\!\mel{i}{VUV'}{j}}^{2\alpha},
\end{align}
with $V' = V^\dagger W$. The quantity $S_\alpha(U;V,W)$ takes a particularly elegant form, reminiscent of the Welch bounds~\cite{datta2012geometry}. Using these bounds we find for $\alpha>1$

\begin{equation}
    S_\alpha(U;\mathbb{I},V)\;\geq\;\frac{D}{\binom{D+\alpha-1}{\alpha}},
\end{equation}
However, this bound is far from saturable, as for $\alpha = 2$ one would need at least $D^2$ vectors.

Such measures would be rendered meaningless given full freedom of basis choice -- every unitary can be equivalent to a diagonal matrix, or a Fourier matrix by a proper choice of $V$ alone, thus reaching minimal and maximal values, respectively. However, in realistic settings, we will usually be dealing with partial freedom. 

For instance, it is natural to assume that we deal with a bipartite system, $D = d^2$, and to restrict our attention to local bases, $V,W\in U(d)\otimes U(d)$. Then one may consider the possible range of entropies achievable,
\begin{widetext}
\begin{equation}
    {\operatorname{range}}(S_\alpha(U)) = \qty{\min_{V, W \in U(d)^{\otimes 2} }S_\alpha(U;W,V),\max_{V, W \in U(d)^{\otimes 2}} S_\alpha(U;W,V)}~,
\end{equation}
\end{widetext}
corresponding directly to formula \eqref{range_def1} by relation \mbox{$V' = V^\dagger W$}, \bl{with extrema taken once again over tensor products of local unitary operators from $U(d)$ i.e. $V = v_1\otimes v_2$, $W = w_1 \otimes w_2$.}

Equipped with these, we may start asking questions about possible ranges for different operators. One certain thing one can say is that if $U = U_A\otimes U_B$, then we cover the entire possible range for a given entropic measure, for example

\begin{equation}
    \operatorname{range}(S_0(U_A\otimes U_B)) = \qty{1,d^2}~,
\end{equation}
which are values for permutation and Hadamard matrices, respectively. 
The same holds also if $U = P$ is a permutation matrix, in particular, constructed from two orthogonal Latin squares

\begin{equation}
\label{P_coch_range}
    \operatorname{range}(S_0(P)) = \qty{1,d^2}~,
\end{equation}
which are values for a permutation ($V = W = \mathbb{I}$) and Hadamard matrices ($W = H_d \otimes H_d$, $V = \mathbb{I}$), respectively.
Those two examples are subcases of general observation:
\begin{lem}
Bipartite unitary matrix $U$ have a maximal coherence range if and only if it is locally equivalent to a permutation matrix.  
\end{lem}

This proceeds from the fact that to achieve both minimal and maximal coherences $S_\alpha$, the bipartite matrix must be locally equivalent to both a permutation matrix when all rows of $U$ are vectors in the computational basis, and the Hadamard matrix when are rows of $U$ are unbiased with respect to computational basis. Moreover, the latter stream from the former as presented in equation \eqref{P_coch_range}.

Using the above observation we may state one more property of the coherence range.

\begin{theo}
For any $S_\alpha$, the set of bipartite unitary matrices with maximal coherence range is a disconnected set of measure zero w.r.t Haar measure.
\end{theo}

\begin{proof}
The set of permutation matrices of size $d^2 \times d^2$ is a finite, disjoin subset of bipartite unitary matrices. Moreover the smooth mapping on entangling power $e_p$ -- gate typicality $g_t$ plane, preservers those properties~\cite{permutatrion_entangling_power}. On the other hand, the allowed values of entangling power and gate typicality for bipartite matrices form a non-degenerated area. Thus the set of unitary matrices locally equivalent to a permutation is both disjoint, because any path connecting two permutations with different $e_p$ cannot consist of permutations, and have measure zero.
\end{proof}

Due to the above, this measure doesn't suit monotone for a potential resource theory. The underlying free set would be non-convex and disconnected, which would impede the application of almost all known tools from the resource-theoretic field.

For a generic $U$ 
coherence
range is not trivial, especially the minimal value points towards the non-vanishing coherence of the matrix.
For example, the 2-unitary matrix for local dimension $d = 6$ obtained in~\cite{36_officers_of_Karol} cannot obtain the limit value of $S_\alpha$ corresponding to permutation, since there are no permutation 2-unitary matrices in local dimension $6$. 

Our solution $U_{49}$,which also locally inequivalent to a permutation, with $\phi_1=\phi_2=0$ yields 

{\footnotesize
\begin{align*}
    \qty[\frac{4255 -18\sqrt{2}}{117649}, \frac{115}{343}] 
    \subset ~
    \operatorname{range}(S_2(U_{49}(0,0)))  \subset \qty[\frac{1}{49},1] ~.
\end{align*}}
While in general case the minimum of $S_2$ is given by
\begin{widetext}
{\footnotesize
    \begin{align*}
    \min_{V, W\in U(d)^{\otimes 2}} \!\!\!\!\! S_2(U_{49}(\phi_1,\phi_2)) = \frac{1}{117649}&\left[4260-6 \sqrt{ 2} -3 \sqrt{7} \sin (\phi_1-\phi_2)-3 \sqrt{7} \sin (\phi_1+\phi_2)-3 \sqrt{14} \sin (\phi_1+2 \phi_2)+3 \sqrt{14} \sin (2 \phi_1+\phi_2)\right.\\
    &-9 \cos (\phi_1-\phi_2)-\left(50 \sqrt{2}+21\right) \cos (\phi_1+\phi_2) -3 \sqrt{2} \cos (2 \phi_1+\phi_2)+41 \sqrt{2} \cos (\phi_1+2 \phi_2)\\
    &+\left.\sqrt{7 \left(300 \sqrt{2}+697\right)} \sin (\phi_2)+25 \cos (\phi_2))\right]~.
\end{align*}}

In the table below we present estimated coherence ranges for $U_{49}$ and $U_{81}$ and compare them to permutations.

\begin{table}[h]
    \centering
    \begin{tabular}{|c|c|c|c|c|c|c|}
    \hline
         $\alpha$ & \multicolumn{2}{c|}{0} & \multicolumn{2}{c|}{2} & \multicolumn{2}{c|}{$\infty$} \\\hline
         \multicolumn{1}{c|}{} & min $S_0$ & max $S_0$ & min $S_2$ & max $S_2$ & min $S_{\infty}$ & max $S_{\infty}$ \\  \cline{2-7} \hline
         $P_{49}$ & 1 & 49 & $1/49$ & 1 & $1/7$ & 1 \\
         $U_{49}$ & $31/7$  & 49 & 0.042 & $115/343$ & 0.27... & $\frac{7+6\sqrt{14}}{49}$ \\ \hline \hline 
         $P_{81}$ & 1 & 81 &  $1/81$ & 1 & $1/9$ & 1 \\
         $U_{81}$ & $7/3$ & 81 & $5/729$ &  $5/9$ & $1/9$ & $\frac{3+2\sqrt{3}}{9}$ \\ \hline
    \end{tabular}
    \caption{Comparison of coherence ranges of $S_0$ (the average number of non-zero entries of each row of a matrix),
    $S_2$ (related to the average purity of such a vector) and  $S_{\infty}$ (mean value of the largest entry of each vector), for 2-unitary permutation matrices $P_{d^2}$ and new construction of 2-unitary matrices $U_{49}$ and $U_{81}$. To simplify the expressions we fixed the parameters of $U_{49}$ by setting $\phi_1 = \phi_2 = 0$, and for $U_{81}$ we focused on the most incoherent case with $a_i = b_i = c_i = \frac{1}{\sqrt{3}}$, \mbox{$\theta_i = \phi_i = \frac{2\pi}{3})$}. }
    \label{tab:my_label}
\end{table}

\section{Calculation of entangling power for tristochastic channels}\label{app:q_tri_call}

In this Appendix, we explicitly derive the results discussed in Section \ref{sec:tristoch}.
First, let us focus on the entangling power $e_p$ and gate typicality $g_t$ for unitary matrix corresponding to convolutional channel $\Phi_A$: $U_{ki,lj} = A_{klj} (a_{k,l})_i$. Taking into account that $A_{klj}$ is a permutation tensor and $\{|a_{k,l}\ra\}$ is a $l^{\text{'th}}$ basis vector from the $k^{\text{'th}}$ basis, one can calculate that:

\begin{equation}
 \mathcal{E}\qty(\ket{U}) = 1 - \frac{1}{d^4} \sum_{k,l,k',l',j} A_{klj} A_{k'l'j} |\la a_{k,l}| a_{k',l'}\ra|^2 ~,\mathcal{E}\qty(\ket{US}) = 1 - \frac{1}{d^4} \sum_{k,k',l} |\la a_{k,l}| a_{k',l}\ra|^2~.
\end{equation}

Therefore we immediately get the following bounds
\begin{equation*}
1 - \frac{1}{d} \leq \mathcal{E}\qty(\ket{U}) \leq 1 - \frac{1}{d^2} ~,~~~~~ 1 - \frac{1}{d} \leq \mathcal{E}\qty(\ket{US}) \leq 1 - \frac{1}{d^2}~.
\end{equation*}
Using the above, we establish a general bound for the entangling power and gate typicality for the convolutional channels
\begin{equation}
\frac{d-1}{d+1} \leq e_p(U) \leq 1 ~,~~~~~ \frac{1}{2} - \frac{1}{2d + 2} \leq g_t(U) \leq \frac{1}{2} + \frac{1}{2d + 2}.
\end{equation}

However, after a closer look, one can obtain a tighter bound for entangling power. Let us consider the sum $\mathcal{E}\qty(\ket{U}) + \mathcal{E}\qty(\ket{US})$,
\begin{equation}
\label{eu+es}
\begin{aligned}
\mathcal{E}\qty(\ket{U}) + \mathcal{E}\qty(\ket{US}) =  2 & - \frac{1}{d^4} \sum_{k,l,k',l',j} A_{klj} A_{k'l'j} |\la a_{k,l}| a_{k',l'}\ra|^2  - \frac{1}{d^4} \sum_{k,k',l} |\la a_{k,l}| a_{k',l}\ra|^2 \\
 =  2 & - \frac{1}{d^4} \qty(\sum_{k,l,l',j} A_{klj} A_{kl'j} |\la a_{k,l}| a_{k,l'}\ra|^2 + \sum_{k\neq k',l,l',j} A_{klj} A_{k'l'j} |\la a_{k,l}| a_{k',l'}\ra|^2)   \\
& - \frac{1}{d^4} \qty(\sum_{k,l} |\la a_{k,l}| a_{k,l}\ra|^2 +  \sum_{k \neq k',l} |\la a_{k,l}| a_{k',l}\ra|^2)~.
\end{aligned}
\end{equation}
In order to understand this expression better we need to manipulate the indices to our advantage.
In the first and second sums, we see that for each value of $k$ and $l$ there exists only one value of $j$ such that $A_{klj}$ is nonzero, hence only these components contribute to the sums. Moreover, in the first sum both $A_{klj}$ and $A_{kl'j}$ are simultaneously nonzero only if $l' = l$, because $A_{klj}$ is a permutation tensor. By the similar argument let us define $\sigma_{kl}(k')$ as the only value of $l'$ such that the product $A_{k'l'j} \, A_{klj}$ is nonzero. After all this renaming above calculations can be summarized as  

\begin{equation*}
\begin{aligned}
\mathcal{E}\qty(\ket{U}) + \mathcal{E}\qty(\ket{US}) =
 2& - \frac{1}{d^4} \sum_{k,l} |\la a_{k,l}| a_{k,l}\ra|^2 - \frac{1}{d^4} \sum_{k\neq k',l} |\la a_{k,l}| a_{k',\sigma_{kl}(k')}\ra|^2 \\
 &- \frac{1}{d^4} \sum_{k,l} |\la a_{k,l}| a_{k,l}\ra|^2 - \frac{1}{d^4} \sum_{k \neq k',l} |\la a_{k,l}| a_{k',l}\ra|^2 = \\
=2 & - 2\frac{1}{d^2} - \frac{1}{d^4} \sum_{k\neq k',l} |\la a_{k,l}| a_{k',\sigma_{kl}(k')}\ra|^2  +|\la a_{k,l}| a_{k',l}\ra|^2 \geq \\
\geq2 & - \frac{2}{d^2} - \frac{d^2(d-1)}{d^4} = 2- \frac{d+1}{d^2}~,
\end{aligned}
\end{equation*}   
\end{widetext}
where in the last line we used the fact that $|\la a_{k,l}| a_{k',\sigma_{kl}(k')}\ra|^2  +|\la a_{k,l}| a_{k',l}\ra|^2$ is a sum of squared amplitudes of two coefficients of vector $|a_{kl}\ra$ in the basis $\{a_{k',l'}\}_{l' = 1}^d$,  so by normalization it must be smaller than $1$. Inserting obtained bound for $\mathcal{E}\qty(\ket{U}) + \mathcal{E}\qty(\ket{US})$ into the formula for entangling power one finds:

\begin{equation}
1 - \frac{1}{d+1} \leq e_{p}(U)
\end{equation}

If  all the bases $\{a_{k,l}\}_{l = 1}^d$ are mutually unbiased~\cite{mub_paper}, then $|\la a_{k,l}|a_{k',l'}\ra|^2 = \frac{1}{d}$ for $k \neq k'$ and $|\la a_{k,l}|a_{k,l'}\ra|^2 = \delta_{l,l'}$ as in equation \eqref{MU_def}, we obtain a unitary $U = U_{\text{MUB}}$ with unbiased basis in \eqref{U_form1}. This, in turn, lets us explicitly calculate:
\begin{equation*}
e_p(U_{\text{MUB}}) = 1 - \frac{2}{d^2 + d} ~,~~~~~ g_t(U_{\text{MUB}}) = \frac{1}{2} 
\end{equation*}
No matter which permutation tensor $A_{klj}$ we start with.

Next, let's discuss the average values of entangling power $e_p$ and gate typicality $g_t$.

\begin{theo}
The average value of entangling power $e_p$ and gate typicality $g_t$ of unitaries corresponding to convolutional channels is the same as for $U_{\text{MUB}}$, presented above.
\end{theo}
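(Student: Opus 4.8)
The plan is to reduce the Haar average of both $e_p$ and $g_t$ to the average of a single scalar overlap. The natural starting point are the exact expressions \eqref{max_ep_cond}, which write $\mathcal{E}(\ket{U})$ and $\mathcal{E}(\ket{US})$ entirely through the moduli $|\langle a_{k,l}|a_{k',l'}\rangle|^2$. I would model the randomness by drawing, for each $k$, the orthonormal basis $\{\ket{a_{k,l}}\}_{l=1}^d$ as the columns of an independent Haar-distributed unitary $V_k\in U(d)$; since all the quantities involved are invariant under local unitaries, the gauge choice $V_1=\id$ used elsewhere in the paper does not affect these averages.

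First I would isolate the one averaging identity on which the whole computation rests. For $k=k'$ orthonormality gives $|\langle a_{k,l}|a_{k,l'}\rangle|^2=\delta_{ll'}$ deterministically, with no averaging needed. For $k\neq k'$ the two bases are statistically independent, so conditioning on $\ket{a_{k,l}}$ and using that a column of a Haar-random unitary is uniformly distributed on the unit sphere of $\mathcal{H}_d$ yields $\langle |\langle a_{k,l}|a_{k',l'}\rangle|^2\rangle = \tfrac{1}{d}$, independently of $l,l'$.

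Next I would carry out the combinatorial bookkeeping for both sums, exploiting the permutation-tensor property that fixing any two of the three indices of $A_{klj}$ determines the third. For $\mathcal{E}(\ket{US})$ the diagonal $k=k'$ part contributes exactly $d^2$, while the $d^2(d-1)$ off-diagonal terms each average to $\tfrac{1}{d}$, so $\langle\sum_{k,k',l}|\langle a_{k,l}|a_{k',l}\rangle|^2\rangle = 2d^2-d$. For $\mathcal{E}(\ket{U})$ the constraint $A_{klj}A_{k'l'j}=1$ collapses the $l,l'$ summations, since each pair $(k,j)$, respectively $(k',j)$, selects a unique $l$, respectively $l'$; the same split into $k=k'$ and $k\neq k'$ then produces the identical count $2d^2-d$. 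Hence $\langle\mathcal{E}(\ket{U})\rangle = \langle\mathcal{E}(\ket{US})\rangle = 1-\tfrac{2}{d^2}+\tfrac{1}{d^3}$.

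Finally I would substitute into the closed formulas with $\mathcal{E}(\ket{S})=1-\tfrac{1}{d^2}=\tfrac{(d-1)(d+1)}{d^2}$. For entangling power the numerator $\langle\mathcal{E}(\ket{U})\rangle+\langle\mathcal{E}(\ket{US})\rangle-\mathcal{E}(\ket{S})$ equals $1-\tfrac{3}{d^2}+\tfrac{2}{d^3}=\tfrac{(d-1)^2(d+2)}{d^3}$, and dividing by $\mathcal{E}(\ket{S})$ gives $\langle e_p\rangle = \tfrac{(d-1)(d+2)}{d(d+1)}=1-\tfrac{2}{d^2+d}$, the MUB value. For gate typicality the decisive observation is the equality $\langle\mathcal{E}(\ket{U})\rangle=\langle\mathcal{E}(\ket{US})\rangle$ obtained above: the combination antisymmetric in $\mathcal{E}(\ket{U})$ and $\mathcal{E}(\ket{US})$ averages to zero, leaving $\langle g_t\rangle=\tfrac{1}{2}$. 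I expect the only genuine obstacle to be the middle step, namely keeping the deterministic same-basis orthonormality cleanly separated from the independent cross-basis averaging while correctly tracking how the permutation structure of $A_{klj}$ eliminates the redundant index sums; once that separation is handled, the remaining algebra is routine.
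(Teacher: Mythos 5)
Your proposal is correct and takes essentially the same approach as the paper: both model the bases as columns of independent Haar-random unitaries and reduce everything to the single identity $\langle\,|\langle a_{k,l}|a_{k',l'}\rangle|^2\,\rangle = \tfrac{1}{d}$ for $k\neq k'$ (and $\delta_{ll'}$ for $k=k'$), which is exactly the mutually-unbiased value, so linearity of $e_p$ and $g_t$ in these overlaps finishes the argument. The only difference is one of execution — the paper stops at this term-by-term matching with the MUB values, while you carry out the explicit combinatorial sums (both equal to $2d^2-d$) and substitute into the closed formulas, correctly using the antisymmetric combination for $g_t$ (note the paper's displayed $g_t$ formula contains a sign typo, which your computation implicitly fixes).
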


\begin{proof}
Let us start by rewriting each basis $\{| a_{kl}\ra\}_{l = 1}^d$ as a unitary matrix $U_k$, 
\begin{equation*}
    \left[U_k\right]_{li} = (a_{kl})_i~.
\end{equation*}
Hence average over all bases can be rephrased as the integration of entangling power $e_p$ and gate typicality $g_t$ over $U(d)^{\otimes d}$ with Haar measures. Moreover, both entangling power $e_p$ and gate typicality $g_t$ are linear combinations of expressions of the form $|\la a_{kl}|a_{k'l'}\ra|^2 $, so one might change the order of integration and summation and focus only on the following integral:
\begin{equation*}
\begin{aligned}
& \int_{SU(d)} dU_1 \cdots \int_{SU(d)} dU_d \; |\la a_{kl}|a_{k'l'}\ra|^2 =\\
&=\int_{SU(d)} dU_{k'} \int_{SU(d)}dU_k | [U_k U_{k'}^\dagger]_{ll'} |^2 = \\
&=\int_{SU(d)} dU_{k'} \int_{SU(d)}d (U_k U_k')| [U_k]_{ll'} |^2 =  \\
&=\int_{SU(d)} dU_{k'} \int_{SU(d)}d U_k | [U_k]_{ll'} |^2  = \frac{1}{d}
\end{aligned}
\end{equation*}
for $k \neq k'$, where we used the fact that $\int_{SU(d)} dU = 1$ and unitary invariants of Haar measure. For $k = k'$ one gets
\begin{equation*}
\begin{aligned}
& \int_{SU(d)} dU_1 \cdots \int_{SU(d)} dU_d \; |\la a_{kl}|a_{kl'}\ra|^2 =\\
& = \int_{SU(d)}dU_k | [U_k U_{k}^\dagger]_{ll'} |^2 \!=\! \int_{SU(d)}dU_k | [U_k U_{k}^\dagger]_{ll'} |^2 \delta_{ll'} = \\
& = \delta_{ll'}~.
\end{aligned}
\end{equation*}

Since the average value of $|\la a_{kl}|a_{k'l'}\ra|^2$ over all possible bases is the same as for MUB's, the average value of entangling power $e_p$ and gate typicality $g_t$ is the same as for $U_{\text{MUB}}$'s.
\end{proof}

For comparison note that the average over the entire unitary group $U(d^2)$ readers 
\begin{equation}
\label{standard_average_ep}
\la e_p(U_{d^2})\ra_{CUE} = \frac{(d-1)^2}{d^2 +1}~,~~ \la g_t(U_{d^2})\ra_{CUE} = \frac{1}{2}~.
\end{equation}

After becoming acquainted with the behaviour of convolutional channel we are ready to present the proof of the Theorem \ref{theo_ep_stoch}.

\begin{proof}
Assume that there exists some  unitary $U_{ki,lj} = A_{klj} (a_{kl})_i$ with maximal entangling power $e_p(U) = 1$. Since maximal entangling power translates to the maximal value of the sum $\mathcal{E}\qty(\ket{U}) + \mathcal{E}\qty(\ket{US})$, by equation \eqref{eu+es} (and discussion therein), it corresponds to

\begin{equation}
\label{max_ep_cond1}
0 = \sum_{k\neq k',l,l',j} A_{klj} A_{k'l'j} |\la a_{k,l}| a_{k',l'}\ra|^2,  
\end{equation}
\begin{equation}
\label{max_ep_cond2}
0 =  \sum_{k \neq k',l} |\la a_{k,l}| a_{k',l}\ra|^2.
\end{equation}

On the other hand, quantum tristochasticity of the channel $\Phi_A$ is equivalent to the condition that, 
\begin{equation*}
\Phi_A[\rho\otimes   \rho^*] = \Phi_A[\rho^*\otimes   \rho]  = \rho^*~,
\end{equation*}
for any $\rho$, where $\rho^*$ is a maximally mixed state~\cite{bistron2023tristochastic}.

This property, in turn, is equivalent to conditions \eqref{max_ep_cond1},\eqref{max_ep_cond2}. Which can be seen from examination of the off-diagonal values of $\rho^* = \Phi_U(\rho, \rho^*)$:
\begin{equation}
0  =\rho_{kk'}^* = \sum_{l, l',j} A_{klj} A_{k'l'j} |\la a_{k,l}| a_{k',l'}\ra|^2 \frac{\rho_{l l'} }{d}~,
\end{equation}
which is true for any $\rho$ if and only if all terms in  \eqref{max_ep_cond1} are equal to zero. This, in turn, is equivalent to their sum being equal to zero due to their nonnegativity.

By placing the maximally mixed state in the second argument one gets
\begin{equation}
\begin{aligned}
& 0  = \rho_{kk'}^*= \sum_{l,j,j'} A_{klj} A_{k'lj'} |\la a_{k,l}| a_{k',l}\ra|^2 \frac{\rho_{jj'}}{d} = \\
& = \sum_l |\la a_{k,l}| a_{k',l}\ra|^2  \frac{\rho_{j(k,l)j(k',l)}}{d}~,
\end{aligned}
\end{equation}
where $j(k,l)$ is such that $A_{k,l,j(k,l)} = 1$.
This is equivalent to the condition \eqref{max_ep_cond2} by the same token as above.
\end{proof}

\section{Orthogonal gates with large entangling power in dimension $6\times 6$}\label{app:orto6}

Although we did not find quhex bipartite unitary channels with $e_p = 1$, we found several solutions attaining the same value of entangling power as the current record~\cite{bruzda2022structured} for orthogonal channels: $e_p = \frac{208 + \sqrt{3}}{210} \approx 0.9987$.
Below we present the corresponding Latin square and a bases $\{|a_{k,l}\ra\}_{l = 1}^6$ giving such an exemplary channel by equation \eqref{U_form1}. The Latin square reads:

\begin{equation*}
\left(
\begin{array}{cccccc}
 1 & 2 & 3 & 4 & 5 & 6 \\
 2 & 1 & 4 & 3 & 6 & 5 \\
 5 & 6 & 1 & 2 & 3 & 4 \\
 6 & 5 & 2 & 1 & 4 & 3 \\
 3 & 4 & 6 & 5 & 1 & 2 \\
 4 & 3 & 5 & 6 & 2 & 1 \\
\end{array}
\right)    
\end{equation*}
and the corresponding bases, given as orthogonal matrices, are
{\scriptsize
\begin{equation*}
\begin{aligned}
& \left(
\begin{array}{cccccc}
 1 & \cdot & \cdot & \cdot & \cdot & \cdot \\
 \cdot & 1 & \cdot & \cdot & \cdot & \cdot \\
 \cdot & \cdot & 1 & \cdot & \cdot & \cdot \\
 \cdot & \cdot & \cdot & 1 & \cdot & \cdot \\
 \cdot & \cdot & \cdot & \cdot & 1 & \cdot \\
 \cdot & \cdot & \cdot & \cdot & \cdot & 1 \\
\end{array}
\right) &
& \left(
\begin{array}{cccccc}
 \cdot & \cdot & \cdot & 1 & \cdot & \cdot \\
 \cdot & \cdot & 1 & \cdot & \cdot & \cdot \\
 \cdot & \cdot & \cdot & \cdot & -a & -a \\
 \cdot & \cdot & \cdot & \cdot & a & -a \\
 1 & \cdot & \cdot & \cdot & \cdot & \cdot \\
 \cdot & 1 & \cdot & \cdot & \cdot & \cdot \\
\end{array}
\right) \\
& \left(
\begin{array}{cccccc}
 \cdot & 1 & \cdot & \cdot & \cdot & \cdot \\
 1 & \cdot & \cdot & \cdot & \cdot & \cdot \\
 \cdot & \cdot & \cdot & 1 & \cdot & \cdot \\
 \cdot & \cdot & 1 & \cdot & \cdot & \cdot \\
 \cdot & \cdot & \cdot & \cdot & -b & -b' \\
 \cdot & \cdot & \cdot & \cdot & b' & -b \\
\end{array}
\right)&
& \left(
\begin{array}{cccccc}
 \cdot & \cdot & \cdot & \cdot & c & -c' \\
 \cdot & \cdot & \cdot & \cdot & c' & c \\
 \cdot & 1 & \cdot & \cdot & \cdot & \cdot \\
 1 & \cdot & \cdot & \cdot & \cdot & \cdot \\
 \cdot & \cdot & 1 & \cdot & \cdot & \cdot \\
 \cdot & \cdot & \cdot & 1 & \cdot & \cdot \\
\end{array}
\right) \\
& \left(
\begin{array}{cccccc}
 \cdot & \cdot & 1 & \cdot & \cdot & \cdot \\
 \cdot & \cdot & \cdot & 1 & \cdot & \cdot \\
 \cdot & \cdot & \cdot & \cdot & c' & -c \\
 \cdot & \cdot & \cdot & \cdot & c & c' \\
 \cdot & 1 & \cdot & \cdot & \cdot & \cdot \\
 1 & \cdot & \cdot & \cdot & \cdot & \cdot \\
\end{array}
\right) &
& \left(
\begin{array}{cccccc}
 \cdot & \cdot & \cdot & \cdot & b' & b \\
 \cdot & \cdot & \cdot & \cdot & -b & b' \\
 1 & \cdot & \cdot & \cdot & \cdot & \cdot \\
 \cdot & 1 & \cdot & \cdot & \cdot & \cdot \\
 \cdot & \cdot & \cdot & 1 & \cdot & \cdot \\
 \cdot & \cdot & 1 & \cdot & \cdot & \cdot \\
\end{array}
\right)
\end{aligned}
\end{equation*}}
where $a = \frac{1}{\sqrt{2}}$, $b = \frac{\sqrt{3} -1}{2\sqrt{2}}$, $b' = \frac{\sqrt{3}+1}{2 \sqrt{2}}$, $c = \frac{1}{2}$, $c' = \frac{\sqrt{3}}{2}$.

In contrast to the previous results of orthogonal matrices close to $2$-unitary, obtained in~\cite{bruzda2022structured} by a numerical search, we propose a heuristic construction which leads to the explicit analytic result which can be extended into a continuous family of unitary matrices with the same entangling power close to unity.

\section{Entangling power in multipartite systems}\label{sec:multi_eq}

Entanglement becomes significantly more complex when shifting from bipartite to multipartite systems -- there is no unique entanglement measure, and to make matters worse, it requires more than a single one for full description. Thus, multipartite entanglement becomes more of a landscape and less of a line~\cite{bengtsson2016brief, szalay2015multipartite, guo2021genuine}. 
It follows naturally, that entangling power can also be defined in many ways. In our analysis we focus on the definition provided in ref.~\cite{linowski2019entangling}.

\begin{defi}
Entangling power $E$ for an $m-1$ partite unitary channel $U$ is defined as the entanglement generated by the map $U$ averaged over all possible separable states $\ket{\psi_{sep}} =  |\psi_1\ra \otimes |\psi_2\ra \otimes \cdots$, 
\begin{equation}
E(U) = \la  \mathcal{E}_m(U|\psi_{sep}\ra) \ra_{|\psi_{sep}\ra}.
\end{equation}
where the measure of multipartite entanglement $\mathcal{E}_m$ is taken to be the average of entanglements with respect to all possible bipartitions $p|q$ of the system.
\begin{equation}
\mathcal{E}_m(|\psi\ra) = \frac{1}{2^{m-2}-1} \sum_{p|q} \mathcal{E}(|\psi\ra_{p|q}) 
\end{equation}
\end{defi}

There exists a general analytical formula for multipartite entangling power, which we present below in the simplified form with dimensions of all components equal $d$.

\begin{theo}\label{multi_ep_def}\cite{linowski2019entangling}
The entangling power for an $m-1$ partite unitary channel $U$ can be calculated as an average of entangling powers $E_{p|q}(U)$ with respect to all bipartitions $p|q$, 
\begin{equation}
\label{ep_multi_def}
E(U) = \frac{1}{2^{m-2}-1}\sum_{p|q} E_{p|q}(U)~.
\end{equation}
Entangling power $E_{p|q}(U)$ of $U$ with respect to bipartition $p|q$ in turn, is given by
{\footnotesize
\begin{equation}
\label{Epq_def}
E_{p|q}(U) = 2\left(1\!-\!\left(\frac{d}{d+1}\right)^{m-1} \!\!\sum_{x|y} \Tr\left[\Tr_{p,x}[|U\ra\la U|]^2\right]\right),
\end{equation}}
where the sum $\sum_{x|y}$ is also taken with respect to all the bipartitions of $m-1$ subsystems.
\end{theo}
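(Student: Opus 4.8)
The plan is to establish the two displayed identities separately: the first, \eqref{ep_multi_def}, is a formal consequence of linearity, while the closed form \eqref{Epq_def} requires a Haar twirl followed by a translation into the Choi picture.

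First I would observe that the average over separable inputs $\la\,\cdot\,\ra_{|\psi_{sep}\ra}$ and the average over bipartitions built into $\mathcal{E}_m$ act on independent data and are both linear. Interchanging them gives
\[
E(U)=\frac{1}{2^{m-2}-1}\sum_{p|q}\la\,\mathcal{E}\big((U|\psi_{sep}\ra)_{p|q}\big)\,\ra_{|\psi_{sep}\ra},
\]
which is exactly \eqref{ep_multi_def} once one sets $E_{p|q}(U):=\la\mathcal{E}((U|\psi_{sep}\ra)_{p|q})\ra_{|\psi_{sep}\ra}$, the mean bipartite linear entropy generated across a single cut. (The count $2^{m-2}-1$ is precisely the number of nonempty unordered bipartitions of the $m-1$ parties, matching the normalisation of $\mathcal{E}_m$.) This settles the first equation with no computation.

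The work lies in evaluating a single $E_{p|q}(U)=1-\overline{\Tr[\rho_p^2]}$, with $\rho_p=\Tr_q\big(U|\psi_{sep}\ra\la\psi_{sep}|U^\dagger\big)$. Here I would use the replica (swap) trick $\Tr[\rho_p^2]=\Tr\big[(S_p\otimes I_q)(\Pi\otimes\Pi)\big]$, where $\Pi=U|\psi_{sep}\ra\la\psi_{sep}|U^\dagger$ and $S_p$ swaps the $p$-subsystems of the two replicas, and then carry the Haar average inside. Since $|\psi_{sep}\ra=|\psi_1\ra\otimes\cdots\otimes|\psi_{m-1}\ra$ is a product and the factors are averaged independently, the second moment factorises subsystem by subsystem, each factor contributing the symmetric-subspace average $\int d\psi_k\,(|\psi_k\ra\la\psi_k|)^{\otimes2}=\tfrac{I_k+S_k}{d(d+1)}$. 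Expanding the product $\bigotimes_{k}(I_k+S_k)=\sum_{x|y}S_x$ over subsets $x$ of the input subsystems produces the prefactor $\big(d(d+1)\big)^{-(m-1)}$ and the sum over input bipartitions $\sum_{x|y}$ that appears in \eqref{Epq_def}.

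It then remains to identify each twirled trace $\Tr\big[(S_p\otimes I_q)(U\otimes U)\,S_x\,(U^\dagger\otimes U^\dagger)\big]$ with a purity of a reduced Choi state. For this I would invoke the Choi--Jamio{\l}kowski isomorphism together with the transpose trick $(A\otimes I)|\Psi^+\ra=(I\otimes A^\top)|\Psi^+\ra$, which converts the input-side swaps $S_x$ into swaps of the corresponding \emph{reference} subsystems of $|U\ra=(U\otimes\id)|\Psi^+\ra$, while $S_p$ swaps the output subsystems; applying the replica trick in reverse then collapses each term to $\Tr\big[\Tr_{p,x}(|U\ra\la U|)^2\big]$. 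Collecting the powers of $d$ absorbed by the normalisation of $|\Psi^+\ra$ turns $\big(d(d+1)\big)^{-(m-1)}$ into $\big(d/(d+1)\big)^{m-1}$ and yields \eqref{Epq_def}. I expect this last translation to be the main obstacle: one must track very carefully which subsystems are swapped versus traced on both the output and reference sides and reconcile all surviving powers of $d$ and the overall factor $2$, since it is this bookkeeping — rather than any single clever identity — that fixes the constants in the stated formula.
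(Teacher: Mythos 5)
First, a point of reference: the paper itself contains \emph{no} proof of this statement --- it is imported verbatim, with its normalization, from the cited work \cite{linowski2019entangling}, so your attempt can only be judged on its own merits. On those merits, your strategy is the correct and standard one (and almost certainly the one behind the original result): the first identity \eqref{ep_multi_def} is indeed pure linearity, interchanging the Haar average with the average over the $2^{m-2}-1$ unordered nontrivial bipartitions built into $\mathcal{E}_m$; the second follows from the replica trick, the single-qudit second moment $\int d\psi_k\,\qty(\op{\psi_k})^{\otimes 2} = (I_k+S_k)/\qty(d(d+1))$, the expansion $\bigotimes_k (I_k+S_k) = \sum_x S_x$ over subsets $x$ of the input subsystems, and the identity $\Tr\qty[(S_p\otimes I_q)(U\otimes U)\,S_x\,(U^\dagger\otimes U^\dagger)] = d^{2(m-1)}\Tr\qty[\Tr_{p,x}\qty(\op{U})^2]$, which is correct and whose powers of $d$ do recombine into $(d/(d+1))^{m-1}$. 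One thing you get implicitly right and should state explicitly: your expansion forces $\sum_{x|y}$ to run over \emph{all} $2^{m-1}$ subsets $x$, including $x=\emptyset$ and $x=\{1,\dots,m-1\}$ (these contribute the constant purities $d^{-|p|}+d^{-|q|}$, without which the formula already fails to reproduce Zanardi's bipartite result); so despite the theorem's word ``also'', the $x|y$ sum is over a different index set than the $p|q$ sum.

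The genuine gap is the overall factor of $2$, which you defer to ``bookkeeping'': it is not bookkeeping, and it cannot emerge from the integral you set up. With $E_{p|q}(U):=\la 1-\Tr[\rho_p^2]\ra$ as in your first paragraph (and as the paper's own convention $\mathcal{E}(\ket{\psi})=1-\Tr(\rho_A^2)$ dictates), your computation terminates at $E_{p|q}(U) = 1 - (d/(d+1))^{m-1}\sum_{x|y}\Tr\qty[\Tr_{p,x}\qty(\op{U})^2]$ --- checking against the bipartite Zanardi formula confirms that this version, \emph{without} the $2$, is the correct Haar average of the plain linear entropy. The prefactor $2$ in \eqref{Epq_def} is a normalization inherited from \cite{linowski2019entangling}, where the entanglement across a cut is quantified by $2\qty(1-\Tr[\rho_p^2])$ (the sum of the linear entropies of both marginals, i.e.\ the squared I-concurrence). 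Hence, to make the two displayed equations simultaneously true, that factor must be built into the bipartite measure from the start; as written, your two halves prove statements that are mutually inconsistent by exactly this factor, and no amount of tracking powers of $d$ in the Choi translation will repair it.
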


\begin{widetext}

\section{Coherification of multi stochastic permutation tensors}\label{app:milti_coch}

In this Appendix, we generalize the construction of optimal coherifications from~\cite{bistron2023tristochastic} for multi-stochastic permutation tensor. The obtained results let us establish multipartite convolutional channels as generalization of convolutional channels. Following ref. \cite{bistron2023tristochastic, Korzekwa_coch} a coherification is considered to be 'optimal' if the norm-2 coherence achieves its maximal value. This measure quantifies the average contribution of the non-diagonal entries of the dynamical matrix $D$ \cite{Korzekwa_coch},

\begin{equation}
\label{C2_def_eq}
C_2(\Phi_D) = \frac{1}{N^4}\left(\sum_{kmln}|(D_A)^{k\;n\;}_{\;l\;m}|^2 - \sum_{kmln}|(D_{A,\text{diag}})^{k\;n\;}_{\;l\;m}|^2\right) \\
= \frac{1}{N^4} \sum_{\mu} \lambda_{D,\mu}^2 - \frac{1}{N^4}\sum_{\nu} \lambda_{D_{T,\text{diag}},\nu}^2~.
\end{equation}
Here $\Phi_D$ is a given coherification of a tristochastic tensor $A$, $\lambda_{D,\mu}$ are eigenvalues of dynamical matrix $D_A$ and $\lambda_{D_{T,\text{diag}},\mu}$ are eigenvalues of the dynamical matrix of diagonal coherification $D_{A,\text{diag}}$ without any non-diagonal terms.

Let us consider the Kraus representation of the channel $\Phi_A$. The Kraus operators $\{K_k\}$ are, in this case rectangular matrices such that
\begin{equation*}
    \Phi_D[\rho_1 \otimes\cdots\otimes \rho_{m-1}] = \sum_k K_k (\rho_1 \otimes\cdots\otimes \rho_{m-1}) K_k^\dagger~~~,
\end{equation*}
so the connection between Kraus operators and the dynamical matrix is given as
\begin{equation*}
D_{\; i_1 \; \vb{I}}^{i_1' \; \vb{I}'} = 
 \sum_k (K_k)_{i_1}^{I} (\overline{K}_k)_{i_1'}^{I'} ~,
\end{equation*}
where $I$ denotes the combination of indices $i_2, \cdots, i_m$ in the following way: $I = i_2\; d^{m-1}+ i_3 d^{m-2} +\cdots+  i_{m-1}\; d+ i_m $ while  the multi index $\vb{I}$ is constructed as  $\vb{I} = i_2 \cdots i_m$.
  
To ensure that $\Phi_D$ is a coherification of $A$, we therefore demand that $\sum_k |(K_k)_{i_1}^{I}|^2 = A_{i_1 \vb{I}}$.
This implies that any Kraus operator $K_j$ can have nonzero entry $(K_j)_{i_1}^{I}$ if and only if $A_{i_1 I}$ is nonzero. Because for each $i_1 \cdots i_{m-1}$ there exist only one $i_m$ such that $A_{i_1 i_2 \cdots i_m} = 1$ we may enumerate those entries as $(a_{i_1;i_2,\cdots,i_{m-1}})$ and hereafter consider vectors $(a_{i_1;i_2,\cdots,i_{m-1}})_k$. Thus we will slightly abuse our notation, and use a multi-index $\vb{I}_{|}$ to denote  $i_2\cdots,i_{m-1}$. For example for the $4$-stochastic permutation tensor (4-dimensional hypercube):

\begin{equation}
\label{example_permutation_4tensor}
A = \left(\begin{matrix}
1 & 0 \\
0 & 1 \\
\end{matrix}\right.
\left|\begin{matrix}
0 & 1 \\
1 & 0 \\
\end{matrix}\right.
\Bigg{|}\Bigg{|}\begin{matrix}
0 & 1 \\
1 & 0 \\
\end{matrix}
\left|\begin{matrix}
1 & 0 \\
0 & 1 \\
\end{matrix}\right)~,
\end{equation}
we get Kraus operators of the form:
\begin{equation}
\label{eq41_v4}
K_k ={ \left(\begin{matrix}
(a_{(1;1,1)})_k & 0 & 0 & (a_{(1;1,2)})_k & 0 & (a_{(1;2,1)})_k & (a_{(1;2,2)})_k & 0 \\
0 & (a_{(2;1,1)})_k & (a_{(2;1,2)})_k & 0 & (a_{(2;2,1)})_k & 0 & 0 & (a_{(2;2,2)})_k \\
\end{matrix}\right)}~.
\end{equation}

Next we examine the condition $\sum_{k} K_k^\dagger K_k = \id$. Because in each column of each Kraus operator, there is only one nonzero parameter from the diagonal terms of $\sum_{k} K_K^\dagger K_k$  we obtain the condition $||a_{i_1;\vb{I}_{|}}||^2 = 1$.

Moreover because in $i_1^{\text{th}}$ row of each $K_k$ all coefficient $a_{i_1;\vb{I}_{|}}$ have first index the same and equal to $i_1$ (by the construction of these coefficients),  from the non-diagonal terms of $\sum_{i} K_k^\dagger K_k$ we get $\la a_{i_1; \vb{I}_{|}}|a_{i_1; \vb{I}_{|}'}\ra = 0$.  Thus for each value of $i_1$ the  vectors $\{|a_{i_1; \vb{I}_{|}}\rangle\}_{\vb{I}_{|} = 1}^{d^{m-2}}$ forms an orthonormal set, therefore the norm two coherification of $A$ would be maximal if this orthonormal set would spam the same space for any value of $i_1$. Hence the number of Kraus operators, which is equal to the dimension of vector space in which $| a_{i_1; \vb{I}_{|}}\ra$ lives, must be equal $d^{m-2}$. The norm two coherence $C_2$ of that coherification reads:
{\footnotesize
\begin{equation}
\begin{aligned}
 C_2(\Phi_D) & =  \frac{1}{d^{2(m-1)}}\Bigg( \sum_{i_1, i'_1, \vb{I}_{|}, \vb{I}_{|}' } |\la a_{i_1; \vb{I}_{|}}|a_{i_1'; \vb{I}_{|}'}\ra|^2 -   \sum_{i_1,\vb{I}_{|}} |\la a_{i_1; \vb{I}_{|}}|a_{i_1; \vb{I}_{|}}\ra|^2  \Bigg)=  \frac{1}{d^{2(m-1)}} \Bigg( \sum_{i_1,\vb{I}_{|}} \sum_{i'_1 \vb{I}_{|}' }  |\la a_{i_1; \vb{I}_{|}}|a_{i_1'; \vb{I}_{|}'}\ra|^2 -  \sum_{i_1,\vb{I}_{|}} 1 \Bigg) = \\
& = \frac{1}{d^{2(m-1)}} \Bigg( \sum_{i_1} \sum_{\vb{I}_{|}}  1 -  \sum_{i_1,\cdots, i_{m-1}} 1 \Bigg) = \frac{d - 1}{d^{(m-1)} },
\end{aligned}
\end{equation}}
where the third step comes from the decomposition of each (normalized) vector $|a_{i_1; \vb{I}_{|}}\ra$ in the basis $\{a_{i_1'; \vb{I}_{|}'}\ra\}_{\vb{I}_{|} = 1}^{d^{m-2}}$.

For such coherification of $m$-stochastic permutation tensor the construction of a quantum channel via a unitary channel and partial trace is analogical as in~\cite{bistron2023tristochastic}. Because we have $d^{m-2}$ Kraus operators $K_j$ we can construct unitary from them  in the following way
\begin{equation}
\begin{aligned}
\Phi_D[\rho_1\otimes\cdots\otimes \rho_{m-1}] & =  \Tr_{2,\cdots, m-1}\left[ U (\rho_1\otimes\cdots\otimes \rho_{m-1}) U^\dagger\right]~, \\
U & = \sum_{j_1,\cdots j_{m-2}} K_{d^{m-2}(j_1-1) + \cdots + j_{m-2} }  \otimes \left(|j_1\ra \otimes \cdots\otimes |j_{m-2}\ra \right)~,
\end{aligned}
\end{equation}
where we exchanged the index $k$, enumerating Kaus operators, by the set of indices $j_1,\cdots j_{m-2} \in \{1, \cdots,d\}$. Thus we obtained the desired structure of convolutional channels as unitaries followed by a partial trace.

Moreover, such unitaries once again have a structure of block diagonal unitary matrices with $d$ blocks: $B$, multiplied by some permutation matrix $P$ corresponding to underlying permutation tensor $A$ : $U = B P$ 
\begin{equation*}
U_{i_1,j_1,\;j_2,\;\cdots,\; j_{m-2}}^{\;i_2,\;i_3,\;i_4,\;\cdots\;,i_m\;} = A_{i_1,i_2,\cdots,i_m} (a_{(i_1;~i_2,\cdots,i_{m-1})})_{j_1, j_2,\cdots,j_{m-2}}~.
\end{equation*}
For example, for the permutation tensor \eqref{example_permutation_4tensor} we get:
\begin{equation}
\label{eq53_v4}
U = \left(\begin{matrix}
(a_{(1;1)})_{1} & 0 & 0 & (a_{(1;2)})_{1} & 0 & (a_{(1;3)})_{1} & (a_{(1;4)})_{1} & 0 \\
(a_{(1;1)})_{2} & 0 & 0 & (a_{(1;2)})_{2} & 0 & (a_{(1;3)})_{2} & (a_{(1;4)})_{2} & 0 \\
(a_{(1;1)})_{3} & 0 & 0 & (a_{(1;2)})_{3} & 0 & (a_{(1;3)})_{3} & (a_{(1;4)})_{3} & 0 \\
(a_{(1;1)})_{4} & 0 & 0 & (a_{(1;2)})_{4} & 0 & (a_{(1;3)})_{4} & (a_{(1;4)})_{4} & 0 \\
0 & (a_{(2;1)})_{1} & (a_{(2;2)})_{1} & 0 & (a_{(2;3)})_{1} & 0 & 0 & (a_{(2;3)})_{1} \\
0 & (a_{(2;1)})_{2} & (a_{(2;2)})_{2} & 0 & (a_{(2;3)})_{2} & 0 & 0 & (a_{(2;3)})_{2} \\
0 & (a_{(2;1)})_{3} & (a_{(2;2)})_{3} & 0 & (a_{(2;3)})_{3} & 0 & 0 & (a_{(2;3)})_{3} \\
0 & (a_{(2;1)})_{4} & (a_{(2;2)})_{4} & 0 & (a_{(2;3)})_{4} & 0 & 0 & (a_{(2;3)})_{4} \\
\end{matrix}\right)~~~,
\end{equation}
for which the block diagonal matrix yields:

\begin{equation}
\label{eq54_v4}
B = \left(\begin{matrix}
(a_{(1;1)})_{1} & (a_{(1;2)})_{1} & (a_{(1;3)})_{1} & (a_{(1;4)})_{1} & 0 & 0 & 0 & 0\\
(a_{(1;1)})_{2} & (a_{(1;2)})_{2} & (a_{(1;3)})_{2} & (a_{(1;4)})_{2} & 0 & 0 & 0 & 0\\
(a_{(1;1)})_{1} & (a_{(1;2)})_{1} & (a_{(1;3)})_{1} & (a_{(1;4)})_{1} & 0 & 0 & 0 & 0\\
(a_{(1;1)})_{1} & (a_{(1;2)})_{1} & (a_{(1;3)})_{1} & (a_{(1;4)})_{1} & 0 & 0 & 0 & 0\\
0 & 0 & 0 & 0 & (a_{(2;1)})_{1} & (a_{(2;2)})_{1} &  (a_{(2;3)})_{1} & (a_{(2;4)})_{1}\\
0 & 0 & 0 & 0 & (a_{(2;1)})_{2} & (a_{(2;2)})_{2} &  (a_{(2;3)})_{2} & (a_{(2;4)})_{2}\\
0 & 0 & 0 & 0 & (a_{(2;1)})_{3} & (a_{(2;2)})_{3} &  (a_{(2;3)})_{3} & (a_{(2;4)})_{3}\\
0 & 0 & 0 & 0 & (a_{(2;1)})_{4} & (a_{(2;2)})_{4} &  (a_{(2;3)})_{4} & (a_{(2;4)})_{4}\\
\end{matrix}\right)~,
\end{equation}
where we enumerated the parameters once again by $(a_{(i_1;\vb{I}_{|})})_{j} := (a_{(i_1; i_2\cdots i_{m-1})})_{j}$.

\section{Multistochasticity of unitary $U$ constructed form orthogonal Latin hypercubes}\label{app:milti_coch_multi}

In this Appendix, we present a proof of (quantum) multi-stochasticity of unitary operations, related to multipartite convolutional channels, constructed from orthogonal Latin cubes given in \eqref{multi_u_Latin}. However, before doing so, we must gently rewrite the unitary matrix of interest.

\begin{lem}
Any unitary channel of the form \eqref{multi_u_Latin}:
\begin{equation}
U = \sum_{i_2,\hdots,i_m}|L_{i_2,\cdots,i_m}^{(1)}, \cdots, L_{i_2,\cdots,i_m}^{(m-1)} \ra \la i_2, \cdots i_m|~,
\end{equation}
can be expressed by a new set on orthogonal Latin hypercubes $M^{(k)}$ as:
\begin{equation}
\label{U_id_2str}
U = \sum_{i_1,j_1,i_2,\hdots,i_{m-2}}| i_1, j_1, \cdots j_{m-2}\ra\la  M_{ i_1, j_1, \cdots j_{m-2}}^{(1)} \cdots M_{ i_1, j_1, \cdots j_{m-2}}^{(m-1)}|~,
\end{equation}
\end{lem}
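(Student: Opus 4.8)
The plan is to read \eqref{multi_u_Latin} as a permutation of the computational basis and then simply invert it. Writing $\mathbf{i} = (i_2,\ldots,i_m)$ and collecting the $m-1$ hypercubes into a single map $\pi(\mathbf{i}) = (L^{(1)}_{\mathbf{i}},\ldots,L^{(m-1)}_{\mathbf{i}})$, the hypothesis that $U$ is unitary is precisely the statement that the kets $|\pi(\mathbf{i})\ra$ run over the whole orthonormal basis as $\mathbf{i}$ does, i.e. that $\pi$ is a bijection of $[d]^{m-1}$ onto itself (this is also what Theorem 5.12 of~\cite{LatinHypercubes} guarantees from mutual orthogonality of the $L^{(k)}$). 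First I would therefore record $U = \sum_{\mathbf{i}} |\pi(\mathbf{i})\ra\la \mathbf{i}|$ and prepare to take the output as the new summation variable.

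Next I would define the new symbols as the coordinates of the inverse permutation: relabelling the output index $\mathbf{o}=(o_1,\ldots,o_{m-1})$ as $(i_1,j_1,\ldots,j_{m-2})$ and setting $M^{(k)}_{\mathbf{o}} := [\pi^{-1}(\mathbf{o})]_k$ (the $(k{+}1)$-th input coordinate recovered from the output), the substitution $\mathbf{i}=\pi^{-1}(\mathbf{o})$ turns the sum into $U = \sum_{\mathbf{o}} |\mathbf{o}\ra\la \pi^{-1}(\mathbf{o})| = \sum_{i_1,j_1,\ldots,j_{m-2}} |i_1,j_1,\ldots,j_{m-2}\ra\la M^{(1)}_{\mathbf{o}}\cdots M^{(m-1)}_{\mathbf{o}}|$, which is exactly \eqref{U_id_2str}. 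This part is a pure change of variables and carries no content beyond the bijectivity of $\pi$.

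The genuine work is to verify that the $M^{(k)}$ again form a set of mutually orthogonal Latin hypercubes. Here I would exploit the fact, invoked earlier for \eqref{multi_u_Latin} and recorded in~\cite{ameMasterThesis}, that $|U\ra$ is an AME state on the $2(m-1)$ legs. For a normalized $0/1$ permutation tensor this is equivalent to the perfect-tensor property: upon restricting the $d^{m-1}$ allowed index strings $(\mathbf{o},\mathbf{i})$ to any $m-1$ of the $2(m-1)$ coordinate positions, each value assignment occurs exactly once, so any $m-1$ coordinates determine the remaining ones. I would then read off the two required properties from suitable balanced index subsets. For the Latin property of $M^{(k)}$: fixing all outputs but $o_r$ and asking that $o_r \mapsto M^{(k)}_{\mathbf{o}} = i_{k+1}$ be injective reduces to the statement that the $m-1$ coordinates $\{i_{k+1}\}\cup\{o_s : s\neq r\}$ determine the string, which is immediate; injectivity on $[d]$ forces bijectivity, so every hypercolumn is a permutation. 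For orthogonality of $M^{(k)}$ and $M^{(l)}$: fixing all outputs but two, $o_r$ and $o_{r'}$, the pair $(o_r,o_{r'})\mapsto (i_{k+1},i_{l+1})$ is injective because the $m-1$ coordinates $\{i_{k+1},i_{l+1}\}\cup\{o_s : s\neq r,r'\}$ again determine the string, so each of the $d^2$ values is attained, which is precisely orthogonality of the corresponding Latin subsquares.

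The main obstacle is not the algebra but the clean identification of ``$|U\ra$ is AME'' with the combinatorial statement ``any $m-1$ of the $2(m-1)$ coordinates are free and determining'', together with the bookkeeping that the subsets I use --- mixing one or two recovered input coordinates with the remaining outputs --- are genuinely balanced $(m-1\,|\,m-1)$ bipartitions and hence covered by the AME hypothesis. If one prefers to avoid invoking the AME property, the same determining-set statements can instead be extracted directly from mutual orthogonality of the original hypercubes $L^{(k)}$, but that route essentially reproves the perfect-tensor condition and is appreciably longer.
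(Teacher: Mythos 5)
Your proof is correct, and at its core it runs on the same engine as the paper's: the supporting index set of $U$ is an orthogonal array (equivalently an MDS code of length $2(m-1)$, size $d^{m-1}$, minimum distance $m$), and this property is blind to which $m-1$ of the $2(m-1)$ coordinate positions one declares to be ``inputs,'' so exchanging the roles of inputs and outputs again produces orthogonal Latin hypercubes. The difference is in packaging. The paper cites Theorem 5.3 of the Latin-hypercube reference in both directions --- orthogonal hypercubes $\to$ MDS code, then, after moving the last $m-1$ columns of the orthogonal array to the front (which trivially preserves Hamming distances), MDS code $\to$ orthogonal hypercubes for the $M^{(k)}$. You instead invoke the AME/perfect-tensor property of $\ket{U}$ (which, for a minimal-support $0/1$ tensor, is precisely the same MDS statement) and then re-prove the MDS $\to$ hypercube direction by hand: hypercolumn bijectivity and subsquare orthogonality both reduce to suitable $(m-1)$-element coordinate sets being determining, and your bookkeeping that $\{i_{k+1}\}\cup\{o_s : s\neq r\}$ and $\{i_{k+1},i_{l+1}\}\cup\{o_s: s\neq r,r'\}$ are indeed halves of balanced bipartitions is correct, as is the finite-set argument upgrading injectivity to bijectivity. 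What your route buys is self-containedness in that direction --- the only external input is the AME fact the paper already invokes before the lemma, so there is no circularity --- at the cost of length; the paper's route is shorter but leans twice on the cited combinatorial equivalence. Your explicit treatment of the change of variables (unitarity $\Leftrightarrow$ bijectivity of $\pi$, with $M^{(k)}$ read off from $\pi^{-1}$) also makes transparent what the paper compresses into the index renaming $i_1 = L^{(1)}_{i_2,\dots,i_m}$, $M^{(1)}_{i_1,j_1,\dots,j_{m-2}} = i_2$, etc.
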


\begin{proof}
By Theorem 5.3 from~\cite{LatinHypercubes} the problem of constructing $m-1$ orthogonal Latin hypercubes is equivalent to the construction of maximal distance separable (MDS) code of $d^{(m-1)}$ words of length $2(m-1)$ from an alphabet of length $d$, and distance between words equal $m$~\cite{LatinHypercubes}. Let us present this code as rows in the orthogonal array $OA$:

\begin{equation*}
OA(L^{(1)}, \cdots, L^{(m-1)})_{(i_2,\cdots, i_m)} =(i_2,\cdots, i_m, L_{i_2,\cdots,i_m}^{(1)}, \cdots, L_{i_2,\cdots,i_m}^{(m-1)} )~,
\end{equation*}
where the distance between any two rows, equal $m$, is understood as a number of coordinates in which two rows differ.

Notice that the distances between rows in $OA$ do not change if one moves the last $m-1$ columns to the front and put the rows in order to construct a new array $OA'$:

\begin{equation*}
OA'(L^{(1)}, \cdots, L^{(m-1)})_{(i_1, j_1,\cdots, j_{m-2})} = 
(i_1, j_1,\cdots, j_{m-2},  M_{ i_1, j_1, \cdots j_{m-2}}^{(1)} \cdots M_{ i_1, j_1, \cdots j_{m-2}}^{(m-1)})~.
\end{equation*}
where we renamed the indices:
\begin{equation*}
\begin{aligned}
& i_1 = L_{i_2,\cdots,i_m}^{(1)},& ~~~~  M_{ i_1, j_1, \cdots j_{m-2}}^{(1)} = i_2,\\
& j_1 = L_{i_2,\cdots,i_m}^{(2)}, & ~~~~  M_{ i_1, j_1, \cdots j_{m-2}}^{(2)} = i_3,\\
& \cdots &  \cdots~~~~~~~~~~~~~~~~~~  \\
\end{aligned}
\end{equation*} 

Thus we obtain a new maximal distance separable code which guarantees that the corresponding hypercubes $M$ are in fact orthogonal Latin hypercubes.
\end{proof}

\begin{theo}
\label{quantum_multi_thm}
The unitary channels defined by \eqref{multi_u_Latin} is a  (quantum) $m$-stochastic channel.
\end{theo}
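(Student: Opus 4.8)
The plan is to verify Definition~\ref{multi_q_def} head-on: for each admissible direction $k$ I must show that the map which keeps the $k$-th leg of the dynamical matrix $D$ as output, feeds states into the remaining legs, and traces them out, is completely positive and trace preserving. The key reduction is that it suffices to produce, for each $k$, a \emph{unitary} $U_{(k)}$ such that the $k$-th directional map equals $\rho\mapsto\Tr_{\mathrm{aux}}\big[\,U_{(k)}\,(\cdots)\,U_{(k)}^\dagger\,\big]$; any map of this Stinespring form (a unitary conjugation followed by a partial trace) is automatically CPTP and hence a legitimate quantum channel. In this way the entire statement collapses to the unitarity of the reshuffled operators $U_{(k)}$, one per admissible direction.

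First I would fix the bookkeeping. The operator \eqref{multi_u_Latin} is the permutation matrix of the orthogonal array whose rows are the codewords $(i_2,\dots,i_m,L^{(1)}_{i_2\cdots i_m},\dots,L^{(m-1)}_{i_2\cdots i_m})$, a code of length $2(m-1)$. Forming $\Phi_A$ amounts to tracing away the auxiliary output legs $L^{(2)},\dots,L^{(m-1)}$, so the dynamical matrix $D$ carries exactly the $m$ visible labels $\{L^{(1)};\,i_2,\dots,i_m\}$, and these are precisely the $m$ legs over which Definition~\ref{multi_q_def} ranges. Direction $k$ then means declaring one of these visible columns the output and the other $m-1$ visible columns the inputs, while the auxiliary columns play the role of the environment to be traced out.

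The crux is the unitarity of $U_{(k)}$, and here I would run the argument of the preceding Lemma in full generality. By Theorem~5.3 of~\cite{LatinHypercubes} the family $\{L^{(i)}\}$ is equivalent to an MDS code with $d^{m-1}$ words, length $2(m-1)$, and minimal distance $m$; permuting its columns preserves Hamming distance and hence the MDS property, so any $m-1$ of the $2(m-1)$ columns form an information set that bijectively determines the complementary columns. For direction $k$ I designate as inputs the $m-1$ visible columns other than the chosen output; being an information set, they fix the remaining $m-1$ columns --- the chosen output together with $L^{(2)},\dots,L^{(m-1)}$ --- through a fresh family of mutually orthogonal Latin hypercubes, exactly as the Lemma produced the $M^{(\cdot)}$ in \eqref{U_id_2str}. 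By Theorem~5.12 of~\cite{LatinHypercubes} the resulting object is again a permutation tensor, so $U_{(k)}$ is unitary, and tracing out $L^{(2)},\dots,L^{(m-1)}$ recovers the $k$-th directional map as a partial trace of a unitary channel. This is the rigorous counterpart of the AME sketch given below \eqref{multi_u_Latin}: maximal entanglement of $|U\ra$ across each relevant bipartition is precisely the assertion that the reshuffled operator stays unitary.

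The step I expect to resist is this index bookkeeping rather than any genuine estimate: one must check that, for every $k$, the particular $m-1$ visible columns chosen as inputs are genuinely independent (so the MDS information-set property applies) and that the partial trace prescribed in Definition~\ref{multi_q_def} coincides exactly with tracing out the auxiliary columns $L^{(2)},\dots,L^{(m-1)}$. Once these identifications are pinned down, complete positivity and trace preservation are immediate from the unitary-dilation-plus-partial-trace structure, and the theorem follows.
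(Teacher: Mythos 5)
Your proposal is correct, but it follows a genuinely different route from the proof the paper gives for this theorem. The paper first proves a Lemma rewriting $U$ in the ``inverse'' parametrization $U = \sum |i_1, j_1,\dots,j_{m-2}\rangle\langle M^{(1)}\cdots M^{(m-1)}|$ via the same MDS column-permutation argument you invoke, and then establishes multistochasticity by a direct computation: placing the maximally mixed state $\rho^*$ in any one input slot and showing the output is $\frac{1}{d}\delta_{i_1'}^{i_1}$, i.e.\ maximally mixed; this settles the claim because quantum $m$-stochasticity is equivalent to this ``unital in every slot'' property (an equivalence the paper imports from the earlier tristochastic-channels work). You instead verify Definition~\ref{multi_q_def} head-on: for each direction $k$ you use the MDS information-set property (any $m-1$ of the $2(m-1)$ columns determine the rest bijectively) to build a reshuffled permutation unitary $U_{(k)}$, and then conclude CPTP of the $k$-th directional map from its Stinespring form. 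The two arguments share the same combinatorial core --- orthogonal Latin hypercubes $\leftrightarrow$ MDS codes, with coordinate permutations preserving the MDS property --- but differ in the quantum step: the paper needs the maximal-mixing characterization of multistochasticity, whereas your dilation argument needs no such equivalence and verifies the definition directly. In fact, your route is a rigorous elaboration of the AME-based sketch the paper gives in the main multistochastic appendix (unitarity of the reshuffled operators $\sum |i_k, L^{(2)},\dots\rangle\langle i_2,\dots,L^{(1)},\dots,i_m|$), to which the theorem you were asked to prove is explicitly offered as an ``alternative proof.'' The only incomplete point is the index bookkeeping you flag yourself --- that the directional contraction of the dynamical matrix $D$ coincides with $\Tr_{\mathrm{aux}}[U_{(k)}(\cdot)U_{(k)}^\dagger]$; this does check out (the sum over codewords reproduces exactly the contraction of $D$ with the input states, with the Kronecker deltas over $L^{(2)},\dots,L^{(m-1)}$ arising from the partial trace), so it is a missing computation rather than a gap in the approach.
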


\begin{proof}
By the above lemma, we may write the unitary matrix \eqref{multi_u_Latin} in a form \eqref{U_id_2str}, so the channel $\Phi_A$ acting on the set of input states, ${k'}^{\text{th}}$ of which is totally mixed $\rho^*$ gives:
\begin{equation}
\begin{aligned}
& \Phi_D[\rho_1 \otimes\cdots\otimes \rho^*\otimes \cdots\otimes \rho_{m-1}]_{i_1'}^{i_1}  = \text{Tr}_{2, \cdots (m-1)}[U (\rho_1 \otimes \cdots \otimes \rho^* \otimes \cdots \rho_{(m-1)} ) U^\dagger]_{i_1'}^{i_1} = \\
& = \sum_{j_1,\cdots, j_{(m-2)}} {\rho_1}_{M_{ i_1', j_1, \cdots j_{m-2}}^{(1)}}^{M_{ i_1, j_1, \cdots j_{m-2}}^{(1)}} \cdots
\frac{1}{d}  \delta_{M_{ i_1', j_1, \cdots j_{m-2}}^{(k)}}^{M_{ i_1, j_1, \cdots j_{m-2}}^{(k)}}
\cdots {\rho_{m-1}}_{M_{ i_1', j_1, \cdots j_{m-2}}^{(m-1)}}^{M_{ i_1, j_1, \cdots j_{m-2}}^{(m-1)}} = \\
&= \frac{1}{d} \delta_{i_1'}^{i_1}\sum_{j_1,\cdots, j_{(m-2)}} {\rho_1}_{M_{ i_1', j_1, \cdots j_{m-2}}^{(1)}}^{M_{ i_1, j_1, \cdots j_{m-2}}^{(1)}} \cdots
\cdots {\rho_{m-1}}_{M_{ i_1', j_1, \cdots j_{m-2}}^{(m-1)}}^{M_{ i_1, j_1, \cdots j_{m-2}}^{(m-1)}}  =  \frac{1}{d} \delta_{i_1'}^{i_1} ~.
\end{aligned}
\end{equation}
In the last step, we used the fact that each density matrix has the same values of upper and lower indices, and the sums run over all values of indices.
\end{proof}

\section{\rd{Maximal coherifications of stochastic tensors and connection to unitary matrices with maximal entangling power}}\label{app:max_coch}

\rd{In this Appendix we study the coherification of general stochastic tensors. We start by discussing maximal coherifications which leads us to a natural definition of uni-stochastic tensors analogous to uni-stochastic matrices. Then we reveal strong connection between the (quantum) multistochasticity and $m$-partite unitaries with maximal entangling power.
We base our consideration on purity of multipartite channels $\gamma(\Phi)$ \cite{Korzekwa_coch} which is defined in the same way as norm-2 coherence, but without extracting the ``diagonal'' part 
\begin{equation*}
\gamma(\Phi) = \frac{1}{d^{2(m-1)}} \sum_{\substack{i_1,\cdots i_m \\ i_1',\cdots i_{m}'}}|(D_{\Phi})_{\;i_1\;i_2\;\cdots}^{i_1' \; i_2\; \cdots}|^2  = \frac{1}{d^{2(m-1)}} \sum_k \lambda_k^2
\end{equation*}
where $\lambda_k$ are the eigenvalues of channel's dynamical matrix $D$ and factor $\frac{1}{d^{2(m-1)}}$ corresponds to normalization of $D$. }

\begin{lem}.
\rd{Purity of any quantum channel $\Phi: \Omega_{d^{m-1}}\to \Omega_d$ is upper bounded as $\gamma(\Phi) \leq \frac{1}{d^{m-2}}$. }
\end{lem}

\rd{\begin{proof}
Let $D_{i_1; i_2 \cdots i_m}^{i_1'; i_2' \cdots i_m'}$ be a dynamical matrix of the channel $\Phi: \Omega_{d^{m-1}} \to \Omega_d$ and $\{K_k\}$
the canonical set of Kraus operators \rd{obtained from eigenvectors of the corresponding Choi state}, so that
\begin{equation*}
    \sum_{i_1,\cdots, i_m} (K_k)_{i_1}^{i_2\cdots i_m} (\overline{K}_{k'})_{i_1}^{i_2\cdots i_m} \propto   \delta_{k,k'}.
\end{equation*}
Then the purity  of $\Phi$ is equal to
\begin{equation*}
\gamma(\Phi) = \frac{1}{d^{2(m-1)}} \lambda_k^2 =  \frac{1}{d^{2(m-1)}} \sum_k \norm{K_k}^4
\end{equation*}
where \rd{$||X|| = \sqrt{\Tr(X^\dagger X)}$ is the Hilbert-Schmidt norm and} the relation $\lambda_k = \norm{K_k}^2$ comes \rd{naturally} from the fact that \rd{the canonical set of} Kraus operators are the eigenvectors of Choi state rescaled by square root of corresponding eigenvalues. 
On the other hand, the \rd{trace-preserving property of the channel} $\Phi$ \rd{corresponds to the Kraus operators providing a resolution of identity,}
\begin{equation*}
    \sum_k K_k^\dagger K_k = \id_{d^{m-1}}~.
\end{equation*}
Because each Kraus operator is a $d \times d^{m-1}$ matrix, $K_k^\dagger K_k$ may have at most $d$ nonzero eigenvalues. Therefore, there must be at least $d^{m-2}$ Kraus operators. 
Moreover, because for each $k$ we have $K_k^\dagger K_k \leq \id$, each eigenvalue of $K_k^\dagger K_k$ must be smaller or equal than $1$ which gives
\begin{equation*}
\lambda_k = \norm{K_k}^2 = \Tr(K_k^\dagger K_k) \leq d \cdot 1 = d. 
\end{equation*} 
Thus we have 
at \rd{least} $d^{m-2}$ Kraus operators, with each norm squared less or equal $d$, the sum of norm squared equal $d^2$ due to normalization condition. The quantity of interest is the maximum sum of norms to the power four.
Thus the extremal case there are exactly $d^{m-2}$ Kraus operators, each with norm squared equal $d$, which gives:
\begin{equation*}
\gamma(\Phi) =  \frac{1}{d^{2(m-1)}} \sum_{k = 1}^{d^{m-2}} ||K_k ||^4 \leq   \frac{1}{d^{2(m-1)}} \sum_{k = 1}^{d^{m-2}} d^2 = \frac{1}{d^{m-2}}.
\end{equation*}
\end{proof}}

\rd{Since the minimal number of Kraus operator coincides with channels obtained from unitary evolution followed by a partial trace, it motivates us to define uni-stochastic tensors, as the ones for which the ``classical'' action can be obtained in exactly such a way. }

\rd{\begin{defi}\label{def:unimulti}
Tensor $A$ of rank $m$ and dimension $d$ is \rd{called} a uni-stochastic tensor, if there exists a unitary channel $U$ such that the action of $U$ on \rd{any set of} diagonal density matrices 
\rd{$\rho^{(i)} = \sum_j p_j^{(i)} \op{i}$}
coincides on diagonal with the action of $A$
\begin{equation}
\label{eq:unimulti}
\sum_{i_2,\cdots, i_m} A_{i_1,\cdots, i_m} p_{i_2}^{(1)} \cdots  p_{i_m}^{(m-1)} = \operatorname{diag}\left( \Tr_{2,\cdots m-1}\left[U \left(\rho^{(1)}\otimes\hdots\otimes\rho^{(m-1)} \right)U^\dagger \right]\right)~.
\end{equation}
where the trace is taken over all subsystems except the first
and $\operatorname{diag}(\rho)$ denotes the vector of diagonal entries of $\rho$.
We call each $U$ satisfying \eqref{eq:unimulti} a unitary coherification of $A$. Moreover if $A$ is also multistochastic, we call such $A$ uni-multistochastic. 
\end{defi}}

\rd{It is relatively easy to \rd{show} that unitary coherifications of a uni-stochastic tensors yeld maximal coherence:}

\rd{\begin{obs}
For any uni-stochastic tensor $A$, its unitary coherification $U$ gives a dynamical matrix $D_A$ with maximal purity $\gamma(\Phi_A) = \frac{1}{d^{m-2}}$, thus also a maximal two norm coherence.
\end{obs}}

\rd{\begin{proof}
The unitary coherification $U$ of uni-stochastic tensor $A$ gives a dynamical matrix
\begin{equation}\label{dynamical_from_uni}
\gamma(\Phi_A) =  D_{i_1, i_2 \cdots i_m}^{i_1', i_2' \cdots i_m'} = \sum_{j_1,\cdots j_{m-2}} U_{i_1, j_1,\cdots j_{m-2}}^{i_2,\cdots,i_m} \overline{U}_{i_1', j_1,\cdots j_{m-2}}^{i_2',\cdots,i_m'}
\end{equation}
Thus we can explicitly calculate its purity:
\begin{equation*}
\begin{aligned}
&\frac{1}{d^{2(m-1)}} \sum_{i_1,\cdots,i_m, i_1'\cdots i_m'} |D_{i_1, i_2 \cdots i_m}^{i_1', i_2' \cdots i_m'}|^2 =\frac{1}{d^{2(m-1)}}  \sum_{i_1,\cdots,i_m, i_1'\cdots i_m'} |\sum_{j_1,\cdots j_{m-2}}  U_{i_1, j_1,\cdots j_{m-2}}^{i_2,\cdots,i_m} \overline{U}_{i_1', j_1,\cdots j_{m-2}}^{i_2',\cdots,i_m'}|^2  = \\
& = \frac{1}{d^{2(m-1)}}  \sum_{i_1,\cdots,i_m, i_1'\cdots i_m'} \sum_{j_1,\cdots j_{m-2}} \sum_{j_1',\cdots j_{m-2}'}  U_{i_1, j_1,\cdots j_{m-2}}^{i_2,\cdots,i_m} \overline{U}_{i_1', j_1,\cdots j_{m-2}}^{i_2',\cdots,i_m'} \overline{U}_{i_1, j_1',\cdots j_{m-2}'}^{i_2,\cdots,i_m} U_{i_1', j_1',\cdots j_{m-2}'}^{i_2',\cdots,i_m'} = \\
& = \frac{1}{d^{2(m-1)}} \sum_{i_1,i_1'} \sum_{j_1,\cdots j_{m-2}} \sum_{j_1',\cdots j_{m-2}'}  \delta_{i_1, j_1,\cdots j_{m-2}}^{i_1, j_1',\cdots j_{m-2}'}    \delta_{i_1', j_1,\cdots j_{m-2}}^{i_1', j_1',\cdots j_{m-2}'} = \frac{1}{d^{2(m-1)}} d^2 d^{(m-2)} = \frac{1}{d^{m-2}}.
\end{aligned}
\end{equation*}
\end{proof}}

\rd{The recipe for coherification of a multi-stochastic permutation tensor can be interpreted as a proof, that each multi-stochastic permutation \rd{tensor} is uni-stochastic, in analogy to ordinary permutation matrix being uni-stochastic in the standard sense (without partial trace) \cite{Zyczkowski2003}. However we stress that similarly as for stochastic matrices, not all stochastic tensors are uni-stochastic.}

\rd{With all the observations in mind we are \rd{prepared} to present the connection between uni-stochastic tensors and quantum multistochasticity.}

\rd{\begin{lem}\label{lem:unicoh_ep}
Any uni-stochastic tensor $A$ whose unitary coherification $U$ has maximal entangling power is $m$-stochastic. Furthermore, its coherification is (quantum) $m$-stochastic as well.
\end{lem}}

\begin{proof}
\rd{Let us first prove the multistochasticity of $A$.
Let $U$ be the unitary coherification of $A$ with maximal entangling power. By Definition \ref{def:unimulti} the elements of $A$ are given by:
\begin{equation}\label{eq:A_U_rel}
A_{i_1,i_2,\cdots,i_m} = \sum_{j_1,\dots j_{m-2} } |U_{i_1,j_1,\dots j_{m-2}}^{i_2,\cdots,i_{m}}|^2.
\end{equation}
Thus\rd{, it is straightforward to see that}
\begin{equation*}
\sum_{i_1} A_{i_1,i_2,\cdots,i_m} = \sum_{i_1} \sum_{j_1,\dots j_{m-2} } |U_{i_1,j_1,\dots j_{m-2}}^{i_2,\cdots,i_{m}}|^2 = 1~.
\end{equation*}
For any other index $i_k$ let us define $U'$ as $U$ with indices $i_1$ and $i_k$ swapped. Since $U$ has maximal entangling power, $|U\ra$ is maximally entangled with respect to all possible bipartitions \cite{linowski2019entangling}. Therefore $U'$ is also a unitary matrix so we obtain
\begin{equation*}
\sum_{i_k} A_{i_1,i_2,\cdots,i_m} = \sum_{i_k} \sum_{j_1,\cdots j_{m-2} } |{U'}_{i_k,j_1,\cdots j_{m-2}}^{i_2,\cdots,i_{m}}  |^2 = 1
\end{equation*}}

\rd{To show the multi stochasticity of the corresponding channel $\Phi_A$, note that its action can be expressed \rd{by} $U$ acting on $m-1$ states, which connects its $m-1$ ``input indices'' to the states, and then partial trace of $m-2$ subsystems, leaving only one ``output index''.
On the other hand the action of complementary channels defined by dynamical matrix $D$ of $\Phi_A$, see Definition \ref{def:multo_perm_coch}, 
corresponds to the same procedure, but with swapped distinct output index $i_1$ with some input index $i_k$. Thus we may leverage unitarity of $U'$ one again to see that
\begin{equation*}
\Tr_{1,\cdots,k-1,k+1,\cdots,m-1 }\;[D(\underbrace{\rho_2^\top \otimes \cdots \otimes }_{k-1\text{ elements}}\mathbb{I} \otimes  \cdots \otimes \rho_{m}^\top) ] = \Tr_{2,\cdots,m-1}[U' (\rho_2^\top \otimes \cdots \otimes\rho_m^\top)(U')^\dagger]~,
\end{equation*}
is a well defined quantum channel as well.}
\end{proof}

\rd{As argued in the Section \ref{sec:multi_perm_coch},
the implication in the opposite direction fails for $m >3$ due to larger number of possible bi-partitions of $|U\ra$: $m(m-1)/2$, than ``directions'' in which dynamical matrix $D$ can be be applied: $m$. However for $m = 3$ we can show that the relation is reciprocal.}

\rd{\begin{theo}
For any uni-stochastic tensor $A$ of rank $m = 3$ its unitary coherification $U$ has maximal entangling power $e_p(U) = 1$ if and only if its coherification is quantum tristochastic.
\end{theo}}

\rd{
\begin{proof}
The implication in one direction \rd{follows from} Lemma \ref{lem:unicoh_ep}. To show the implication in opposite direction let us consider uni-stochastic tensor $A$ with unitary coherification 
\begin{equation*}
\Phi_A[\rho_1\otimes\rho_2] = \Tr_2[U (\rho_1 \otimes \rho_2)U^\dagger]~.
\end{equation*}
The conditions for quantum tristochasticity implies that \cite{bistron2023tristochastic}
\begin{equation*}
\Tr_2[U (\rho^* \otimes \rho)U^\dagger] = \rho^*~,~~~ \Tr_2[U (\rho \otimes \rho^*)U^\dagger] = \rho^*,
\end{equation*}
for any quantum state $\rho$, where $\rho^*\rd{ = \mathbb{I}/d}$ is a maximally mixed state. Thus for a unitary $U$ it means that
\begin{equation*}
\sum_{il} U_{ki,lj} \overline{U}_{k'i,lj'} = \delta_{k,k'} \delta_{j,j'} ~,~~~ \sum_{ij} U_{ki,lj} \overline{U}_{k'i,l'j} = \delta_{k,k'} \delta_{l,l'}
\end{equation*}
hence $U$ is a unitary under arbitrary permutation of indices, so it must have maximal entangling power \cite{jonnadula2017impact}.
\end{proof}

The above results can be interpreted from the opposite perspective. Given unitary $U$ with maximal entangling power, one can always construct a multistochastic tensor $A$ via \eqref{eq:A_U_rel}, for which $U$ provides a maximal coherification.
}

\end{widetext}

\bibliographystyle{quantum_abbr}
\bibliography{biblio}

\end{document}